\def\smallerspacecaption{\vspace{-2mm}}
\newtheorem{theorem}{Theorem}
\newtheorem{definition}{Definition}
\definecolor{Gray}{gray}{0.6}
\definecolor{LightGray}{gray}{0.9}
\definecolor{LighterGray}{gray}{0.7}
\newcolumntype{a}{>{\columncolor{Gray}}c}
\newcolumntype{b}{>{\columncolor{LightGray}}c}
\newcolumntype{d}{>{\columncolor{LighterGray}}c}
\definecolor{cadmiumgreen}{rgb}{0.0, 0.42, 0.24}
\newdimen\arrayruleHwidth
\def\Hline{\noalign{\ifnum0=`}\fi\hrule \@height \arrayruleHwidth
\futurelet \@tempa\@xhline}
\newcolumntype{P}[1]{>{\centering\arraybackslash}p{#1}}
\def\blfootnote{\xdef\@thefnmark{}\@footnotetext}
\newcommand{\llbox}[1]{
    \begin{tcolorbox}[
    sharp corners,
    boxrule=1pt,
    boxsep = 0pt,
    left = 5pt,
    right = 5pt,
    colback=white!10!white,
    colframe=black!50!black,
    ]
    {#1}
    \end{tcolorbox}
}
\begin{document}

\newcommand{\maintitle}{Hide \& Seek}
\newcommand{\subtitle}{Seeking the (Un)-Hidden key in Provably-Secure Logic Locking Techniques}

\newcommand{\thetitle}{\maintitle: \subtitle}

\title{\thetitle}

\author{Satwik~Patnaik,~\IEEEmembership{Member,~IEEE}, Nimisha~Limaye,~\IEEEmembership{Graduate~Student~Member,~IEEE,} and Ozgur~Sinanoglu,~\IEEEmembership{Senior~Member,~IEEE}

\thanks{Manuscript received March 17, 2022; revised August 9, 2022; accepted September 3, 2022.
The associate editor coordinating the review of this
manuscript and approving it for publication was Prof.\ Ulrich R\"uhrmair.
(\textit{Corresponding authors: Satwik~Patnaik and Nimisha~Limaye.})
}
\IEEEcompsocitemizethanks{\IEEEcompsocthanksitem Satwik~Patnaik is with the Department of Electrical and Computer Engineering, Texas A\&M University, College Station, TX 77843, USA (e-mail: satwik.patnaik@tamu.edu).
\IEEEcompsocthanksitem Nimisha~Limaye is with the Department of Electrical and Computer Engineering, Tandon School of Engineering, New York University, Brooklyn, NY 11201, USA (email: nimisha.limaye@nyu.edu).\protect
\IEEEcompsocthanksitem Ozgur~Sinanoglu is with the Division of Engineering, New York University Abu Dhabi, Abu Dhabi 129188, UAE (email: ozgursin@nyu.edu).\protect
}
\thanks{Digital Object Identifier 10.1109/TIFS.2022.XXXXXXX}
}

\markboth{IEEE Transactions on Information Forensics and Security}
{Patnaik \MakeLowercase{\textit{et al.\ }}: Hide \& Seek: Seeking the (Un)-Hidden key in Provably-Secure Logic Locking Techniques}

\IEEEtitleabstractindextext{
\begin{abstract}
Logic locking is a holistic countermeasure that protects an integrated circuit (IC) from hardware-focused threats such as piracy of design intellectual property and unauthorized overproduction throughout the globalized IC supply chain.
Out of the several techniques proposed by the hardware security community, provably-secure logic locking (PSLL) has acquired a foothold due to its algorithmic and provable-security guarantees.
However, the security of these techniques are regularly questioned by attackers that exploit the vulnerabilities arising from the underlying hardware implementation.
Unfortunately, such attacks (i)~are predominantly specific to locking techniques and (ii)~lack generality and scalability. 
This leads to a plethora of attacks and researchers, especially defenders, find it challenging to ascertain the security of newly developed PSLL techniques.
Additionally, there is no public repository of locked circuits that attackers can use to benchmark (and compare) their developed attacks.

Driven by these challenges, we aim to develop a generalized attack that can recover the secret key across a breadth of PSLL techniques.
To that end, we first categorize the existing PSLL techniques into two generic categories.
Then, we extract functional and structural properties depending on the underlying hardware construction of the PSLL techniques and develop two attacks based on the concepts
of VLSI testing and Boolean transformations.
We evaluate our attacks on 30,000 locked circuits across 14 PSLL techniques, including nine unbroken techniques. 
Our attacks successfully recover the secret key (100\% accuracy) for all the considered techniques. 
Further, our experimentation across different (i)~technology libraries, (ii)~commercial and academic synthesis tools, and (iii)~logic optimization settings provide several interesting insights.
For instance, our attacks can recover the secret key by only using the locked circuit when an academic synthesis tool is used. 
Additionally, designers can use our attacks as a verification tool to ascertain the lower-bound security achieved by hardware implementations.
Finally, we shall release our artifacts, which could help foster the development of future attacks and defenses in PSLL domain.
\end{abstract}

\begin{IEEEkeywords}
Hardware security, 
IP protection,
key recovery attack,
provably secure logic locking
\end{IEEEkeywords}
}

\maketitle

\IEEEdisplaynontitleabstractindextext
\IEEEpeerreviewmaketitle

\renewcommand{\headrulewidth}{0.0pt}
\thispagestyle{fancy}
\lhead{}
\rhead{}
\chead{To Appear in IEEE Transactions on Information Forensics and Security (TIFS), 2022}
\cfoot{}

\section{Introduction}
\label{sec:introduction}

\IEEEPARstart{T}{he} continual miniaturization of integrated circuit (IC) technology nodes have exacerbated the costs of commissioning state-of-the-art foundries~\cite{tsmc3nm}. 
Designs are regularly outsourced to potentially untrustworthy foundries, and as a result, several hardware-focused threats have emerged, ranging from piracy of design intellectual property (IP) and unauthorized overproduction of ICs to insertion of malicious logic~\cite{rostami14}.

\textbf{Logic locking} is a holistic countermeasure that protects an IC from several hardware-focused 
threats such as reverse-engineering, piracy of design IP, and unauthorized
overproduction throughout the IC supply chain~\cite{yasin_CCS_2017}.
Logic locking transforms the original circuit by incorporating additional logic (key-gates) controlled by a secret key. 
As a result of inserting key-gates, a locked circuit includes additional inputs, referred to as key-inputs apart from regular primary inputs.
The secret key is stored in a
tamper-proof memory and securely programmed by a trusted facility (e.g., a design house) \textit{after} the fabrication and testing of the ICs.
The application of the correct key ensures the locked circuit functions correctly (for all input patterns), while an incorrect key renders the locked circuit to produce corrupted outputs.
\textit{The security guarantees offered by logic locking techniques are contingent on
the inability of an attacker to
recover the secret key.}
Prior combinational logic locking techniques
focused on (i)~finding suitable locations for key-gate insertion~\cite{epic,JV_DAC_2012}, and (ii)~exploring different key-gates (e.g., multiplexers~\cite{JV-Tcomp-2013}).

\textbf{Input/Output-based attacks:} The Boolean Satisfiability-based attack (commonly known as SAT-based attack in the logic locking community)~\cite{subramanyan15}
broke all known logic locking techniques in 2015.
The attack uses a SAT solver to generate \textit{distinguishing input patterns (DIPs)}---these input patterns enable the elimination of incorrect keys from the key search space.
The DIPs, along with output responses from a working chip (a.k.a. \textit{oracle}), iteratively eliminate incorrect keys, resulting in the recovery of the secret key.
Subsequently, researchers developed approximate-based attacks (\textit{AppSAT}~\cite{shamsi2017appsat} and \textit{Double DIP}~\cite{shen2017double}) that relax the exactness constraint in the SAT-based attack to yield an approximate key.
All the aforementioned attacks utilize input/output (I/O) pairs from an oracle and thus are called \textit{I/O-based attacks}.

\textbf{I/O-based attack resilient locking:} The logic locking community proposed several techniques to thwart I/O-based attacks.
These can be categorized under (i)~point function-based locking,\footnote{Also known as provably-secure logic locking, more details in \S\ref{sec:PSLL}.} (ii)~SAT-hard locking, (iii)~cyclic locking, and (iv)~scan locking.
Concerning (i), researchers proposed augmenting the original circuit with logic structures (e.g., point-functions) that ensure I/O-based attacks can prune out exactly one incorrect key in every attack iteration~\cite{SARLock_host_2016,xie16_SAT}. 
Adopting this construction necessitates I/O-based attacks to query an exponential number of input patterns (regarding key-size) to recover the secret key.
The techniques under (ii), \textit{i.e.,} SAT-hard locking embed
structures (e.g., look-up tables, multipliers)
that realize complicated SAT formulas, which increase the time taken per attack iteration~\cite{kamali2019full}. 
The techniques under (iii), \textit{i.e.,} cyclic locking, instantiate feedback cycles to thwart I/O-based attacks~\cite{shamsi2017cyclic}. 
Inserting cycles inhibits the locked circuit from being modeled as a directed acyclic graph, an important requirement for most I/O-based attacks. 
Finally, the techniques under (iv), \textit{i.e.,} scan locking, obfuscate the scan data, limiting the controllability and observability of internal nets~\cite{karmakar2018encrypt}, a prime enabler behind the success of I/O-based attacks.
We consider the techniques in (i)~because of their algorithmic security guarantees in thwarting I/O-based attacks.

\subsection{Arms Race Between Attackers and Defenders in PSLL}
\label{sec:attacks_defenses_PSLL}

SARLock~\cite{SARLock_host_2016} and Anti-SAT~\cite{xie16_SAT} were the first techniques to thwart I/O-based attacks.
These techniques add point-functions to the original circuit, thereby necessitating an attacker to apply exponential input patterns (regarding key-size) to recover the secret key.
However, both techniques were thwarted by \textit{bypass} attack~\cite{bypass_ches_2017} and removal attacks.\footnote{Removal attacks identify (and isolate) the protection logic and remove it from the locked circuit. 
Removing the protection logic yields the original circuit to an attacker. 
Although we acknowledge the existence of removal attacks, we restrict the discussion to key-recovery attacks in this work.}

Researchers adopted the paradigm of corrupt and correct-based PSLL techniques (also known as stripped-functionality logic locking (SFLL)) where designers enforce controlled corruption for user-specified input pattern(s) by hard-coding them using point-functions. 
These errors are corrected when the correct key is provided through a key-controlled unit~\cite{yasin_CCS_2017}.
However, attackers successfully recovered the secret key through structural and functional analysis~\cite{yang2019stripped,FALL}.
A logic removal-based locking approach (SFLL-rem)~\cite{sengupta2020truly} demonstrated resilience against attackers during a global logic locking competition.
However, this technique has been recently circumvented, where researchers demonstrated the intricacies between logic synthesis and logic locking~\cite{SPI_USENIX}.
Researchers proposed improvements over Anti-SAT (viz., CASLock~\cite{caslock_ches_2019}), which thwarted the bypass attack.
However, researchers have demonstrated attacks that recovered the secret key~\cite{CASUnlock}.

\subsection{Motivation and Research Challenges}
\label{sec:research_challenges}

As evidenced from the previous sub-section, there has been an arms race between attackers and defenders. 
Although a plethora of attacks have been proposed; unfortunately, most attacks target specific PSLL
techniques, as evidenced next.
For instance, the bypass attack~\cite{bypass_ches_2017} demonstrated vulnerabilities in SARLock~\cite{SARLock_host_2016} and Anti-SAT~\cite{xie16_SAT} but could not challenge the security of SFLL techniques~\cite{yasin_CCS_2017,sengupta2020truly}.
The FALL~\cite{FALL} and SFLL-hd-unlocked~\cite{yang2019stripped} attacks were successful in recovering the secret key from variants of 
SFLL-HD but did not apply to SFLL-flex~\cite{yasin_CCS_2017} and SFLL-rem~\cite{sengupta2020truly}. 
The attacks proposed in~\cite{CASUnlock} broke the security guarantees of CASLock~\cite{caslock_ches_2019} 
and Anti-SAT~\cite{xie16_SAT} but did not consider other PSLL techniques such as SFLL-flex~\cite{yasin_CCS_2017}, SFLL-rem~\cite{sengupta2020truly}, and corrupt-and-correct (CAC)~\cite{CAC} (to name a few).
The sparse prime implicant (SPI) attack~\cite{SPI_USENIX} recovered the secret key from SFLL-rem and SFLL-HD$^{0}$ but did not consider several unbroken PSLL techniques such as CAC~\cite{CAC}, diversified tree logic (DTL)~\cite{CAC}, Strong Anti-SAT (SAS)~\cite{liu2020strong}, and variants of Gen-Anti-SAT~\cite{zhou2021generalized}. 
\textit{Despite the existence of
all these attacks, nine PSLL techniques
have not been tackled from the standpoint of key-recovery attacks.\footnote{We refer interested readers to our work in~\cite{limaye2022valkyrie} where we showcased removal attacks.
However, as stated previously (footnote 2), this work aims to recover the secret key from the hardware implementation of PSLL techniques.}}
The aforementioned discussion highlights that state-of-the-art key-recovery attacks are (i)~locking technique specific (\textit{i.e.,} the generality is limited) and (ii)~unable to challenge the security guarantees of recent PSLL techniques.
This leads to our first research challenge.

\llbox{\textbf{RC1:} Can we formulate generalized attacks that recover the secret key from the hardware implementation of unbroken and broken PSLL techniques?}

The  
I/O-based attacks demonstrated that
logic locking techniques having a key-size of $k$ does not necessarily 
imply \textit{k}-bit security.
The actual security level depends on the mathematical primitive and the scheme construction~\cite{guo2018introduction}.
Although PSLL techniques are mathematically sound (assuming that DIPs are chosen uniformly at random and are non-repeated, I/O-based attacks require $2^k$ queries to an oracle) in recovering a $k$-bit key), the hardware implementation 
of these techniques leave structural vulnerabilities
that attackers exploit to recover the secret key. 
\textit{Hence, there is a requirement for a security framework that 
informs a designer regarding the lower-bound security-level attained by the hardware implementation of PSLL techniques.}
This leads to our second research challenge.

\llbox{
\textbf{RC2:} Can we develop a security framework that informs designers regarding the lower-bound security-level attained by the hardware implementation of PSLL techniques?
}

\subsection{Our Research Contributions}
\label{sec:scope_of_work}

\begin{table*}[ht]
\caption{Efficacy of our proposed key-recovery attacks against the state-of-the-art attacks}
\footnotesize
\label{tab:compare_KR}
\setlength{\tabcolsep}{0.43mm}
\begin{tabular}{cccccccccccccccc}
\hline
\multirow{2}{*}{\backslashbox{\textbf{Attack}}{\textbf{Defense}}} 
& 
\multirow{2}{*}{\textbf{SARLock}} 
& 
\multirow{2}{*}{\textbf{Anti-SAT}} 
&
\multirow{2}{*}{\textbf{SFLL-HD$^0$}} 
& 
\multirow{2}{*}{\textbf{SFLL-flex}} 
& 
\multirow{2}{*}{\textbf{SFLL-rem}} 
& 
\multirow{2}{*}{\textbf{CASLock}} 
& 
\multirow{2}{*}{\textbf{ECE}} 
& 
\multirow{2}{*}{\textbf{SAS}} 
& 
\multicolumn{2}{c}{\textbf{Gen-Anti-SAT}} 
& 
\multirow{2}{*}{\textbf{CAC}} 
& 
\multicolumn{3}{c}{\textbf{DTL}} 
\\ 
\cline{10-11} 
\cline{13-15} 
&  
&
&
&
&
&  
&  
&
&
\textbf{Comp.} 
& 
\textbf{Non-comp.} 
&  
& 
\textbf{SARLock} 
& 
\textbf{Anti-SAT} 
& 
\textbf{CAC} \\ 
\hline 
\hline

\textbf{SAT~\cite{subramanyan15}} & \textbf{{\Circle}} & \textbf{\Circle} &
\textbf{\Circle} & \textbf{\Circle} & \textbf{\Circle} & \textbf{\Circle} & \textbf{\Circle} &
\textbf{\Circle} & \textbf{\Circle} & \textbf{\Circle} & \textbf{\Circle} & \textbf{\Circle} &
\textbf{\Circle} &
\textbf{\Circle} 
\\ \hline

\textbf{Bypass~\cite{bypass_ches_2017}} & \textbf{\CIRCLE} & \textbf{\CIRCLE} &
\textbf{\Circle} & \textbf{\Circle} & \textbf{\Circle} & \textbf{\Circle} & \textbf{\Circle} & \textbf{\Circle} &
\textbf{\Circle} & \textbf{\Circle} & \textbf{\Circle} & \textbf{\Circle} &
\textbf{\Circle} &
\textbf{\Circle} 
\\ \hline

\textbf{SFLL-hd-unlocked~\cite{yang2019stripped}} & \textbf{\Circle} & \textbf{\Circle} &
\textbf{\CIRCLE} & \textbf{\Circle} & \textbf{\Circle} & \textbf{\Circle} & \textbf{\Circle} & 
\textbf{\Circle} &
\textbf{\Circle} & \textbf{\Circle} & \textbf{\Circle} & \textbf{\Circle} &
\textbf{\Circle} &
\textbf{\Circle} 
\\ \hline

\textbf{FALL~\cite{FALL}} & \textbf{\Circle} & \textbf{\Circle} &
\textbf{\CIRCLE} & \textbf{\Circle} & \textbf{\Circle} & \textbf{\Circle} & \textbf{\Circle} & \textbf{\Circle} & \textbf{\Circle} &
\textbf{\Circle} & \textbf{\Circle} & \textbf{\Circle} &
\textbf{\Circle} &
\textbf{\Circle} 
\\ \hline

\textbf{SPI~\cite{SPI_USENIX}} & \textbf{\CIRCLE} 
& \textbf{\CIRCLE} 
& 
\textbf{\CIRCLE} & \textbf{\Circle} & \textbf{\CIRCLE} & \textbf{\Circle} & \textbf{\Circle} & \textbf{\Circle} & \textbf{\Circle} & \textbf{\Circle} &
\textbf{\Circle} & \textbf{\Circle} &
\textbf{\Circle} &
\textbf{\Circle} 
\\ \hline

\textbf{CASUnlock~\cite{CASUnlock}} & \textbf{\Circle} & \textbf{\CIRCLE} &
\textbf{\Circle} & \textbf{\Circle} & \textbf{\Circle} & \textbf{\CIRCLE} &
\textbf{\Circle} &
\textbf{\Circle} & \textbf{\Circle} & \textbf{\Circle} & \textbf{\Circle} & \textbf{\Circle} &
\textbf{\Circle} &
\textbf{\Circle} 
\\ \hline

\textbf{This Work} & \textbf{\CIRCLE} & \textbf{\CIRCLE} &
\textbf{\CIRCLE} & \textbf{\CIRCLE} & \textbf{\CIRCLE} & \textbf{\CIRCLE} & \textbf{\CIRCLE} & \textbf{\CIRCLE} & \textbf{\CIRCLE} & \textbf{\CIRCLE} &
\textbf{\CIRCLE} & 
\textbf{\CIRCLE} &
\textbf{\CIRCLE} &
\textbf{\CIRCLE} 
\\ \hline
\end{tabular}
\\
[1mm]
\CIRCLE Successful attack \hspace{5mm}
\Circle Unsuccessful/undocumented attack
\end{table*}

Our work addresses the aforementioned research challenges by developing attacks that successfully recover the secret key from the hardware implementation of PSLL techniques.
Our attacks (i)~apply to a breadth of PSLL techniques, (ii)~successfully recover the secret key for five previously broken and nine unbroken PSLL techniques, (ii)~support industry-adopted Verilog format, (iv)~do not require \textit{a-priori} information, \textit{i.e.,} the functionality of the locked design, (v)~are agnostic to the choice of synthesis tool, synthesis commands, technology libraries, and choice of logic gates used to realize the hardware implementation, (vi)~are scalable to large-scale designs and key-sizes, and (vii)~can be utilized as a diagnostic tool by designers to ascertain the lower bound security-level attained
by the hardware implementation of PSLL techniques.
The primary contributions of our work are as follows.

\begin{itemize}[leftmargin=*]

\item We conceptualize and implement two 
generalized attacks that recover
the secret key from the hardware implementation of 14 PSLL techniques, \textbf{including nine unbroken techniques.}
Our attacks leverage structural and functional properties stemming from the underlying construction of PSLL techniques coupled with VLSI testing principles and Boolean transformations (\S\ref{sec:hardcoded_PSLL} and \S\ref{sec:non_hardcoded_PSLL}).
Our attacks apply to a breadth of PSLL techniques, as opposed to other attacks that have been PSLL technique-specific (Table~\ref{tab:compare_KR}).

\item We demonstrate the efficacy of our key-recovery attacks by performing experiments across 30,000 locked circuits. 
\textbf{Our attacks achieve 100\% accuracy in recovering the secret key for all locked circuits.}
\textit{Our attacks are agnostic to the choice of (i)~synthesis tool, (ii)~synthesis commands, (iii)~technology libraries, and (iv)~logic gates used during synthesis}.
In short, our analysis illustrates the inadequacies of academic and commercial CAD tools used for realizing hardware implementation of PSLL techniques (\S\ref{sec:results}).

\item We present interesting insights from our attacks (\S\ref{sec:findings_KR_attack}) and suggest that security-enforcing designers and developers of PSLL techniques utilize our attacks as
a diagnostic tool (\S\ref{sec:diagnostic_tool}).
Using our attacks, designers can ascertain the lower-bound security level (within a few minutes) achieved by the hardware implementation of a newly developed PSLL technique.
Our analysis reveals that structural security of hard-coded\footnote{Hard-coded PSLL techniques are explained in detail in \S\ref{sec:PSLL}.} PSLL techniques depend on the choice of the secret key, which calls for further investigation on the secure hardware implementation of PSLL techniques.

\item Finally, we shall release our artifacts to foster the development of new attacks and defense techniques.

\end{itemize}
\section{Background and Preliminaries}
\label{sec:background}

\subsection{Notations and Definitions}
\label{sec:notations}

\noindent\textbf{Notations.} Let \begin{math} \mathbb{B} \end{math} = \begin{math} \{0,1\} \end{math} be the Boolean domain.
The notation $\{x_0, x_1, x_2\}$ denotes a set of elements $x_0$, $x_1$, and $x_2$. 
We denote a set $A$ as a subset of set $B$ as $A \subseteq B$.
We use italics to denote variables such as primary inputs or $PI$ = \{$I_i$\}, where $i$ $\in$ $\{0 \dots n-1\}$, primary outputs or $PO$ = \{$O_i$\}, where $i$ $\in$ $\{0 \dots m-1\}$, protected input ports or $PIP \subseteq PI$, protected output ports or $POP \subseteq PO$, key-inputs or $KI$ = \{$K_i$\}, where $i$ $\in$ $\{0 \dots k-1\}$, wires (edges) or $E$ $\in$ \{$n_0$, $n_1$, $\dots$, $n_{p-1}$\}, and gates (vertices) or $V$ $\in$ \{$v_0$, $v_1$, $\dots$, $v_{q-1}$\}. 
Constant pattern is denoted as $\mathtt{PP}$ for protected pattern, $\mathtt{TP}$ for test pattern, $\mathtt{f}$ for fault value, and $\mathtt{K}$ for secret key value.
A pattern value can be denoted as $\langle p_0,p_1,p_2\rangle$, where \{$p_0,p_1,p_2$\} $\in$ \{0,1\}.
Notation $a \land b$ denotes conjunction (AND) of $a$ and $b$, $a \lor b$ denotes disjunction (OR), $a \oplus b$ denotes exclusive or (XOR), and $\lnot a$ denotes logical negation (NOT).

A combinational circuit 
\begin{math} \mathcal{C}_{orig} \end{math}
is a directed acyclic graph (DAG) having $n$ $PIs$ and $m$ $POs$ implementing a Boolean function $F:PI\rightarrow PO$, where $PI = \{0,1\}^n$ and $PO = \{0,1\}^m$.
It contains $p$ wires and $q$ gates.
A logic locking technique $\mathcal{L}$  locks \begin{math} \mathcal{C}_{orig} \end{math} with a secret key $\mathtt{K}$ to obtain a locked circuit \begin{math} \mathcal{C}_{lock} \end{math}.
\begin{math} \mathcal{C}_{lock} \end{math} is $L: PI\times KI \rightarrow PO$. $KI=\{0,1\}^{|K|}$, where $|K|$ denotes the cardinality of $KI$ and is called the key-size.
\begin{math} \mathcal{C}_{lock} \end{math} is fabricated by an \textit{untrustworthy} foundry and converted into a chip \begin{math} \mathbb{C}_{lock} \end{math}.
After \begin{math} \mathbb{C}_{lock} \end{math} is tested and packaged, a \textit{trustworthy} facility (e.g., design house) activates the chip by loading the tamper-proof memory with the correct key $\mathtt{K}$ to obtain an activated chip \begin{math} \mathbb{C}_{act} \end{math}.
This activated chip is also known as an oracle in the logic locking community.
$\mathcal{A}^{\mathbb{S}}$ denotes an attacker $\mathcal{A}$ following an attack strategy $\mathbb{S}$. 
The goal of an attacker $\mathcal{A}^{\mathbb{S}}$ is to recover a key $\mathtt{K}_{rec}$ such that \begin{math} \mathbb{C}_{lock} (i,\mathtt{K}_{rec}) \end{math} = \begin{math} \mathcal{C}_{orig} (i) \end{math}, \begin{math} \forall i \in I \end{math}.
Upon a successful key-recovery attack, the recovered circuit \begin{math} \mathcal{C}_{rec} \end{math} is functionally equivalent to the original circuit \begin{math} \mathcal{C}_{orig} \end{math}, \textit{i.e.,} \begin{math} \mathcal{C}_{rec} (i) \end{math} = \begin{math} \mathcal{C}_{orig} (i) \end{math},
\begin{math} \forall i \in I \end{math}.

\begin{definition}
\textbf{Algorithmic security~\cite{yasin_CCS_2017}.} A logic locking technique $\mathcal{L}$ is $\alpha$-secure against an attacker
$\mathcal{A}^{\mathbb{IO}}$ making a polynomial number of I/O queries $q(\alpha)$ to a working chip \begin{math} \mathbb{C}_{act} \end{math}, if he/she cannot reconstruct \begin{math} \mathcal{C}_{rec} \end{math} correctly with a probability $P_{succ}$ greater than $\frac{q(\alpha)}{2^{\alpha}}$. 
\end{definition}

\begin{definition}
\textbf{Structural security~\cite{SPI_USENIX}.} A logic locking technique $\mathcal{L}$ is $\beta$-secure against an attacker $\mathcal{A}^{\mathbb{S}}$ performing white-box structural analysis of the locked circuit \begin{math} \mathcal{C}_{lock} \end{math}, if the probability to recover the secret is no greater than $\frac{1}{\beta}$.
\end{definition}

\subsection{Threat Model}
\label{sec:threat_model}

Now, we discuss the capabilities of an attacker \begin{math}\mathcal{A}\end{math}, motivation for the attack, and different attack settings.
An attacker reverse-engineers the GDSII\footnote{An industry-standard binary file format used by designers for sharing layout-level information (pertaining to an IC) with foundries.} information and extracts the gate-level netlist of the locked circuit \begin{math} \mathcal{C}_{lock} \end{math}.
In addition, she has access to a test pattern generation (TPG) tool \begin{math} \mathcal{T} \end{math} and a synthesis tool \begin{math} \mathcal{S} \end{math}. 
She also has access to a working copy of the chip \begin{math} \mathbb{C}_{act} \end{math}
(a.k.a. \textit{oracle}) with the secret key loaded in the tamper-proof memory.
Note that under our threat model (which is consistent and agreed upon by researchers in the logic locking community), (i)~access to \begin{math} \mathcal{C}_{lock} \end{math} is unrestricted and (ii)~access to
\begin{math} \mathbb{C}_{act} \end{math} is restricted, \textit{i.e.,} an attacker can only use \begin{math} \mathbb{C}_{act} \end{math}
to make oracle queries (an attacker can apply input pattern(s) and observe the output response(s).
Additionally, we assume that an attacker cannot insert Trojans or probe the tamper-proof memory or the key-registers to recover the secret key.
Furthermore, an attacker (i)~knows the type of PSLL technique implemented by the defender, and (ii)~can distinguish between $PIs$ and $KIs$.
All the assumptions are consistent with \textit{Kerckhoffs's principle}, which states that everything
about the system should be known to an attacker \textit{except} for the secret key.
The objective of an attacker is to extract the secret key \begin{math} \mathtt{K} \end{math} from the hardware implementation of a given PSLL technique which would enable her to pirate the design IP and/or engage in overproduction of ICs.

Using the aforementioned resources and capabilities available to an attacker, we define two attack settings.

\begin{itemize}[leftmargin=*]

\item \textbf{Oracle-less setting:} An attacker $\mathcal{A}^{\mathbb{OL}}$ only uses the locked circuit \begin{math} \mathcal{C}_{lock} \end{math} to recover the secret key.

\item \textbf{Oracle-guided setting:} An attacker $\mathcal{A}^{\mathbb{OG}}$ uses the locked circuit \begin{math} \mathcal{C}_{lock} \end{math} and the oracle \begin{math} \mathbb{C}_{act} \end{math} to recover the secret key.

\end{itemize}

\subsection{Classification of PSLL Techniques}
\label{sec:PSLL}

A crypto-system exhibits \textit{provable security} when mathematical proofs exist showcasing resilience to certain attacks~\cite{buchmann2004introduction}.
A logic locking technique exhibits provable security when it is algorithmically secure against I/O-based attacks under the aforementioned threat model and assumptions discussed next.
\begin{itemize}[leftmargin=*]

\item The effort required by an attacker to determine the correct key $\mathtt{K}$, is exponential in the key-size $|K|$, \textit{i.e.,} $\mathcal{O}(2^{|K|})$.

\item An attacker is restricted from probing the oracle.

\end{itemize}

Based on the underlying hardware construction, we categorize PSLL techniques into \textit{hard-coded} and \textit{non-hard-coded} techniques.\footnote{Out of the many monikers used for different PSLL techniques, we adopt a simpler and generalized categorization of PSLL techniques.}
A hard-coded PSLL technique $\mathcal{L}^{\mathbb{HC}}$ constitutes a pair of algorithms (\textit{Perturb, Restore}).
The \textit{Perturb} algorithm takes the circuit \begin{math} \mathcal{C}_{orig} \end{math} and protected pattern $\mathtt{PP}$ as inputs and returns a functionality-stripped (or modified) circuit \begin{math} \mathcal{C}_{mod} \end{math}, and the associated key \begin{math} \mathtt{K} \end{math}.
The \textit{Restore} algorithm takes the modified circuit \begin{math}\mathcal{C}_{mod}\end{math} and augments a key-controlled restore unit \begin{math} \mathcal{C}_{restore} \end{math}, thereby generating a locked circuit \begin{math} \mathcal{C}_{lock} \end{math} (\begin{math} \mathcal{C}_{lock} \end{math} = \begin{math} \mathcal{C}_{mod} \end{math} $\oplus$ \begin{math} \mathcal{C}_{restore} \end{math}).
Conversely, a non-hard-coded PSLL technique $\mathcal{L}^{\mathbb{NHC}}$ consists just of algorithm \textit{Restore}.
The \textit{Restore} algorithm takes the circuit \begin{math} \mathcal{C}_{orig} \end{math}
as an input, augments a key-controlled restore unit \begin{math} \mathcal{C}_{restore} \end{math}, and returns a locked circuit \begin{math} \mathcal{C}_{lock} \end{math}, and the associated key \begin{math} \mathtt{K} \end{math}; \begin{math} \mathcal{C}_{lock} \end{math} = \begin{math} \mathcal{C}_{orig} \end{math} $\oplus$ \begin{math} \mathcal{C}_{restore} \end{math}.

Hard-coded PSLL techniques (Fig.~\ref{fig:PSLL_categories}(a)) comprise of techniques where the secret is hard-coded (the key is either hard-coded directly through $KIs$ or indirectly through $PIs$).
Examples include SARLock~\cite{SARLock_host_2016}, SFLL-HD$^{0}$~\cite{yasin_CCS_2017}, SFLL-flex~\cite{yasin_CCS_2017}, SFLL-rem~\cite{sengupta2020truly}, CAC~\cite{CAC}, SARLock-DTL~\cite{CAC}, CAC-DTL~\cite{CAC}, and error-controlled encryption (ECE)~\cite{shen2018comparative}.
A designer hard-codes the secret either by (i)~augmenting hard-coded point-functions~\cite{yasin_CCS_2017}, (ii)~replacing a few logic gates in point-functions with OR/NOR gates~\cite{CAC}, or (iii)~by removing logic~\cite{sengupta2020truly}.
On the other hand, non-hard-coded techniques (Fig.~\ref{fig:PSLL_categories}(b)) do not hard-code the secret key in the circuit.
Examples include Anti-SAT~\cite{xie16_SAT}, Anti-SAT-DTL~\cite{CAC}, CASLock~\cite{caslock_ches_2019}, Strong Anti-SAT (SAS)~\cite{liu2020strong}, and variants of Gen-Anti-SAT~\cite{zhou2021generalized}.
The construction comprises two logic functions, $f$ and $g$ (Fig.~\ref{fig:PSLL_categories}(b)), appended to the original circuit through an XOR gate.
While $f$ corresponds to $\overline{g}$ for Anti-SAT, Anti-SAT-DTL, CASLock, and Gen-Anti-SAT (Comp.), $f$ can be any function for Gen-Anti-SAT (Non-comp.)~\cite{zhou2021generalized}.

\begin{figure}[tb]
    \centering
    \includegraphics[width=0.9\columnwidth]{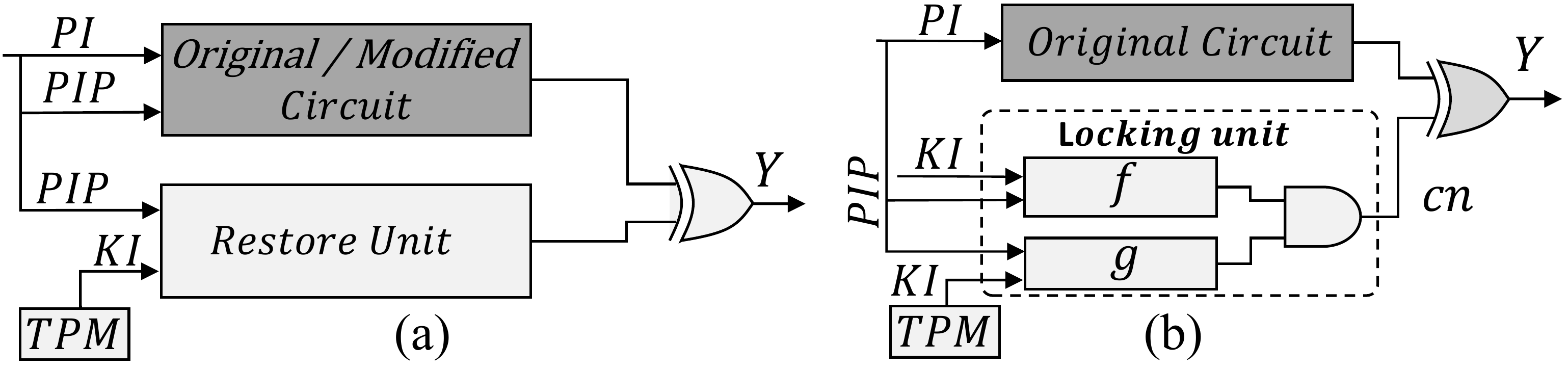}
    \caption{High-level construction of (a)~hard-coded PSLL techniques (b)~non-hard-coded PSLL techniques.
    $PI$ is primary input, $PIP$ is protected input port, $KI$ is key-input, \textit{TPM} is tamper-proof memory, and $cn$ is critical wire.
    For hard-coded PSLL techniques like SARLock~\cite{SARLock_host_2016} and ECE~\cite{shen2018comparative}, the original circuit is present (as is) and the secret is hard-coded in the restore unit through $KIs$.
    For techniques like CAC, CAC-DTL~\cite{CAC}, and SFLL variants~\cite{yasin_CCS_2017, sengupta2020truly}, the construction comprises a modified circuit and a restore unit.}
    \label{fig:PSLL_categories}
\end{figure}

\subsection{Primer on IC Testing}
\label{sec:primer_testing}

IC testing is a critical step in the supply chain that ensures that a fabricated chip does not possess manufacturing defects. 
A single stuck-at fault model is widely used to test circuits for faults~\cite{bushnell2004essentials}.
A wire is tested for both stuck-at-0 (s-a-0) and stuck-at-1 (s-a-1) fault.
Detecting an s-a-0 fault (at a wire) entails generating a test pattern ($\mathtt{TP}$) that sets the wire to logic 1.
There are three steps involved in generating $\mathtt{TP}$, viz., (i)~fault activation, (ii)~path sensitization, and (iii)~line justification.
Consider Fig.~\ref{fig:testing_princ}(a); we wish to detect an s-a-0 fault on $n1$.
$n1$ is output of an AND gate; hence, the input pattern activating $n1$ to logic 1 is $\langle a,b\rangle = \langle1,1\rangle$.
The next step is identifying a path to sensitize this value to a $PO$ ($O1$).
Since only one gate exists in the fan-out of $n1$, the path includes $n1$$\rightarrow$$O1$.
The final step is line justification. 
This step sensitizes other inputs of the logic gate connected to $n1$ to a known value.
As $n1$ fan-outs to NOR gate, the other input of NOR gate ($n2$) must be 0 to propagate the value at $n1$ to $O1$.
$n2$ is connected to an OR gate and is set to logic 0 using input pattern $\langle c,d\rangle = \langle0,0\rangle$.
Thus, the pattern generated to detect a s-a-0 fault at $n1$ is $\langle a,b,c,d\rangle = \langle 1,1,0,0\rangle$, as shown in Fig.~\ref{fig:testing_princ}(b). 
$\mathcal{T}$ generates $\mathtt{TP}$ to detect
faults at any given wire.

\begin{figure}[tb]
\centering
\includegraphics[width=0.95\textwidth]{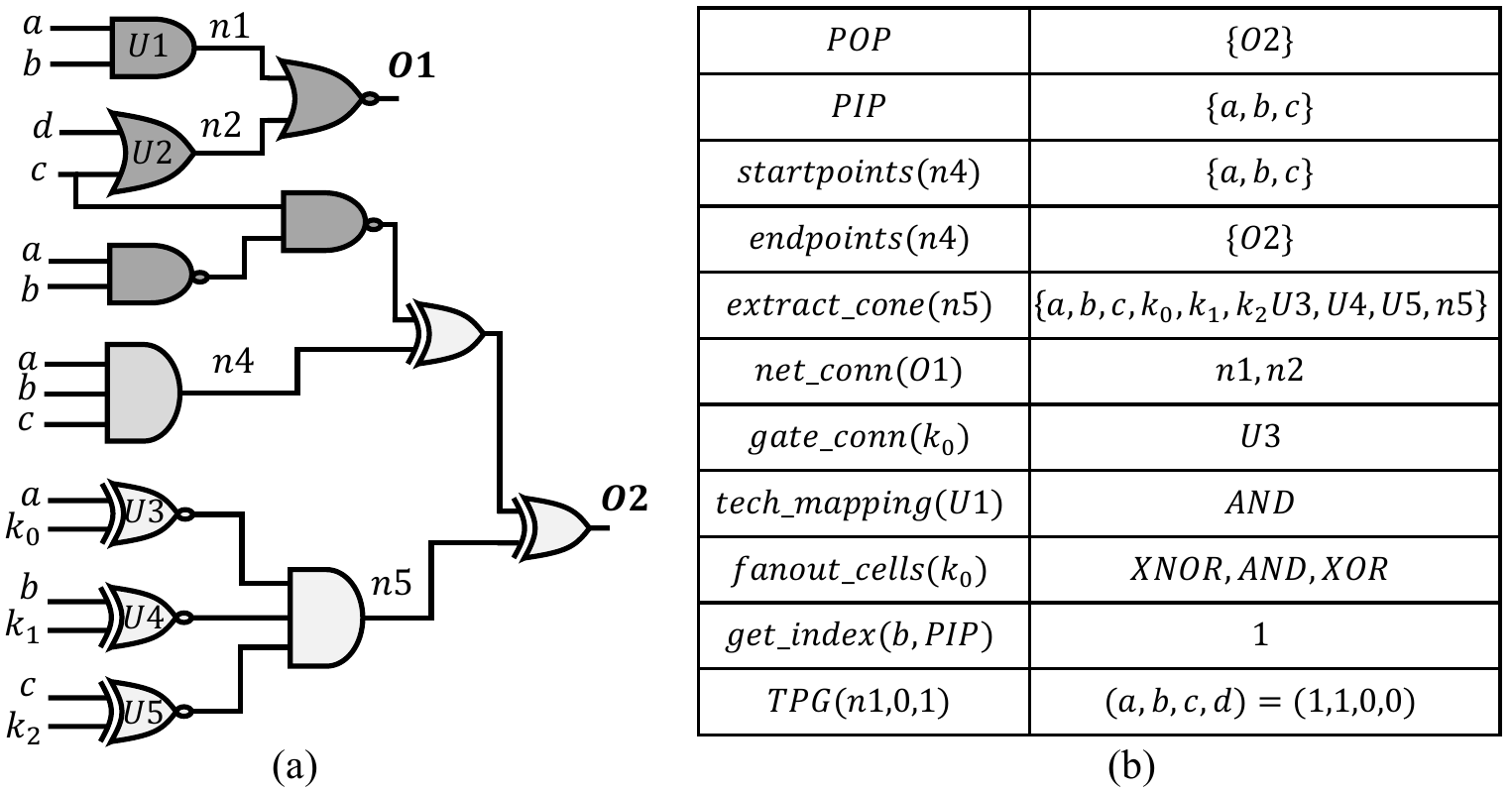}
\caption{Example to illustrate testing principles and common functions.}
\label{fig:testing_princ}
\end{figure}

\subsection{Common Functions Used in Our Key-Recovery Attacks}
\label{sec:functions}

Here we define the functions used in our key-recovery attacks.
The $PIs$ of a wire or gate $n_0$ is extracted using $\mathit{startpoints(n_0)}$; similarly $POs$ of net or gate $n_0$ is extracted using $\mathit{endpoints(n_0)}$.
A logic cone corresponding to a wire or gate $n_i$ is extracted using $\mathit{extract\_cone(n_i)}$.
A topologically sorted list of wires lying in the logic cone of $PO$ ($O_1$) is extracted using $\mathit{net\_conn(O_1)}$.
Gate connected to $KI$ ($K_0$) is obtained using $\mathit{gate\_conn(K_0)}$.
The type of logic gate $v_0$ is obtained using $\mathit{tech\_mapping(v_0)}$.
A topologically sorted list of gates in the fanout of input $K_0$ is obtained using $\mathit{fanout\_cells(K_0)}$.
An element at index $i$ in set $KI$ is denoted as $KI[i]$.
An index of element $k_i$ in set $KI$ is obtained using $\mathit{get\_index(k_i,KI)}$.
A set of test patterns $\mathtt{TP_i}$, where $i$ $\in$ \{0, $\dots$, d-1\} is obtained using an TPG tool ($\mathcal{T}$) to detect stuck-at-fault $\mathtt{f} \in \{0,1\}$ at wire $n_0$.
{$\mathcal{T}$($n_0$,$\mathtt{f}$,$\mathtt{d}$) generates $\mathtt{d}$ $\mathtt{TPs}$ to detect fault $\mathtt{f}$ at wire $n_0$, where $\mathtt{d}$ is a user-defined parameter.}
A synthesis tool ($\mathcal{S}$) is used to perform Boolean transformation using standard logic gates (NAND,AND,OR,NOR,XOR,XNOR,INV).
An attacker ($\mathcal{A}^{\mathbb{IO}}$) having access to an oracle $\mathbb{C}_{act}$ can launch the SAT-based attack $\mathit{SAT()}$ to recover the secret key $\mathtt{K}$.
The functions are explained using an example in Fig.~\ref{fig:testing_princ}(b).
\section{Attack on Hard-coded PSLL Techniques}
\label{sec:hardcoded_PSLL}

In this section, we conceptualize and develop an attack that recovers the secret key from the hardware implementation of hard-coded PSLL techniques.

\textbf{Problem formulation.} Given access to \begin{math} \mathcal{C}_{lock} \end{math} locked using $\mathcal{L}^{\mathbb{HC}}$ and black-box access to a working chip \begin{math} \mathbb{C}_{act} \end{math}, recover the secret key $\mathtt{K}_{rec}$ such that \begin{math} \mathbb{C}_{act} (i) \end{math} = \begin{math} \mathcal{C}_{lock} (i, \mathtt{K}_{rec}) \end{math}, \begin{math} \forall i \in PI \end{math}.

\subsection{Challenges}
\label{sec:hard_coded_RCs}

Our key-recovery attack aims to recover the secret hard-coded protected pattern ($\mathtt{PP}$) that induces output corruption in hard-coded PSLL techniques.
Protecting a circuit using a hard-coded PSLL technique becomes ineffective if the hard-coded $\mathtt{PP}$ does not influence output corruption.
Once the designer converts an algorithmic description of a PSLL technique into its equivalent hardware implementation, the underlying locked circuit becomes a sea of gates and wires.
From an attacker's perspective, examining each logic gate for leaking the secret key can be computationally challenging.
Moreover, the complexity is further exacerbated when designers utilize intricate logic optimization algorithms to generate locked circuits.
Additionally, hard-coded PSLL techniques use varied methods to hard-code the secret (\S\ref{sec:PSLL}).
Furthermore, some hard-coded PSLL techniques protect multiple $\mathtt{PPs}$~\cite{yasin_CCS_2017}.
Thus, an attacker faces the following challenges.

\begin{itemize}
\item [\textbf{C1}] How to identify potential vulnerabilities in a locked circuit and subsequently recover the secret key?
\item [\textbf{C2}] How to develop a generic key-recovery attack to challenge the security of hard-coded PSLL techniques?
In other words, the attack should be agnostic to the underlying construction of the hard-coded PSLL technique.
\end{itemize}

\subsection{Methodology}
\label{sec:methodology}

To address \textbf{C1}, we articulate the following properties rooted in the construction of hard-coded PSLL techniques.

\textbf{Property 1.} The locked circuit must remain testable, \textit{i.e.,} at least one input pattern exists that detects s-a-$0$ and s-a-$1$ faults at every net in the locked circuit.
Formally, \texttt{Pr}$[\mathcal{T}(n_i,f,\mathtt{d})$ = $\perp]$ = $0$, $\exists_\mathtt{d}$ $\forall_{i,f}$:  $i\in E$, $f\in\{0,1\}$.
This property is directly associated with the principles of IC testing~\cite{bushnell2004essentials} and applies to the hardware implementation of all PSLL techniques.

Recall that in a hard-coded PSLL technique, \textit{Perturb} algorithm generates $\mathcal{C}_{mod}$ from $\mathcal{C}_{orig}$, which can be accomplished either by (i)~inserting a hard-coded point-function or (ii)~removing functionality corresponding to this $\mathtt{PP}$.
$\mathcal{C}_{mod}$ will differ (in functionality) from $\mathcal{C}_{orig}$ only for $\mathtt{PPs}$, \textit{i.e.,} there could exist logic cone(s) inside $\mathcal{C}_{mod}$ which on the application of $\mathtt{PPs}$ as inputs would invert the output response of $\mathcal{C}_{orig}$ to induce corruption and obtain $\mathcal{C}_{mod}$.
We define the output of such logic cone(s) as \textit{key-revealing logic gate(s)}---these only activate on the application of $\mathtt{PPs}$ to induce output corruption.
Based on the construction of the hard-coded PSLL techniques considered in this work, we outline two properties that aid in identifying the potential key-revealing logic gate(s).

\textbf{Property 2.} Key-revealing logic gate(s) ($v \in V$) must be connected to $PIP$, \textit{i.e.,} they must have primary inputs $ins$ $\subseteq$ $PIP$ as the startpoints. 
Formally, $ins = \mathit{startpoints(v)}, v \in V$, $ins \subseteq PIP$.

Key-revealing logic gate(s) can be either connected to exactly the same number of $PIP$ (as the key-size) or a number lesser than the key-size to account for synthesis-induced transformations and merging with the original design.
Following the construction of hard-coded PSLL techniques, $PIP$ must be involved in the construction of the key-controlled restore unit.
For example, for techniques where $\mathtt{PP}$ is hard-coded (e.g., SFLL-HD$^0$), 
$PIP$ $\subseteq$ $PI$, whereas, for techniques where the $key$ is hard-coded (e.g., SARLock), $PIP$ $\subseteq$ $KI$.

\textbf{Property 3.} Key-revealing logic gate(s) ($v \in V$) must influence the corruption of only $POP$. 
Simply put, key-revealing logic gate(s) must have only $POP$ as the endpoints.
Formally, $POP = \mathit{endpoints(v)}, v \in V$.

Note that both properties (i)~are derived naturally from the hardware implementation of hard-coded PSLL techniques, and (ii)~prune the search space for an attacker.
However, with increased design complexity, \textit{i.e.,} designs with larger gate count (e.g., b19\_C with 237,962 gates), attackers might end up with a large number of key-revealing logic gate(s). 
To prune the search space further, we outline an additional property.

Recall that augmenting point functions (with hard-coded $\mathtt{PP}$) to the original circuit ensure I/O-based attacks prune out exactly one incorrect key in every attack iteration, thereby leading to stronger resilience (\S\ref{sec:attacks_defenses_PSLL}).
Therefore, the construction of hard-coded PSLL techniques necessitates the activation of key-revealing logic gate(s) only when $\mathtt{PP}$ is applied as an input pattern.
For instance, if the key-revealing logic gate outputs 0 for any non-$\mathtt{PP}$, then applying $\mathtt{PP}$ must toggle this value to $1$.
From an attacker's perspective, the challenge lies in recovering the secret $\mathtt{PP}$.
To recover this secret, an attacker can perform functional simulations (using random input patterns) and observe the output at the key-revealing logic gate(s).
However, obtaining input pattern(s) that justifies key-revealing logic gate(s) to $1$ through random simulations is challenging due to exponential complexity regarding the number of inputs.
\textit{This challenge is addressed by utilizing the principles of test pattern generation.}
More specifically, we use the stuck-at fault model to generate $\mathtt{TPs}$ that detect faults for any logic gate.
Finally, although most hard-coded PSLL techniques protect one $\mathtt{PP}$, some techniques protect multiple $\mathtt{PPs}$~\cite{yasin_CCS_2017}.
These points lead to the articulation of our next property.

\textbf{Property 4.} Key-revealing logic gate(s) ($v \in V$) must be activated for exactly $\mathtt{N}$ input patterns.
In other words, exactly $\mathtt{N}$ input patterns should set the value of the key-revealing logic gate(s) to 1.
Formally, 
\texttt{Pr}$[|\mathcal{T}(v,0,\mathtt{d})|$ = $\mathtt{N}]$ = $1$, $\mathtt{d} > \mathtt{N}$, $v \in V$.
$\mathtt{N}$ is a user-configurable parameter and depends on the underlying construction of the hard-coded PSLL technique. 
For instance, $\mathtt{N}$ is 1 for SFLL-HD$^{0}$ while it can be any number for SFLL-flex, which protects multiple $\mathtt{PPs}$.

\noindent\textbf{Generality.} The aforementioned properties are dictated by the construction of the considered hard-coded PSLL techniques.
Nevertheless, structural and functional properties corresponding to new PSLL techniques can be readily augmented to these properties, which ensures attack upgradability.
\textit{Note that these properties are applicable as long as the underlying PSLL technique demonstrates exponential complexity against I/O-based attacks.}
All our properties hold irrespective of how the defender constructs a modified circuit, \textit{i.e.,} either by adding logic (e.g., CAC~\cite{CAC}) or by removing logic (SFLL-rem~\cite{sengupta2020truly}). 

\begin{theorem}
\label{theorem_hard_coded_NEW}
A partial (or complete) secret can be recovered with test patterns using the stuck-at fault model for a hard-coded PSLL technique satisfying exponential SAT complexity.
\end{theorem}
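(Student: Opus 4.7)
The plan is to argue constructively: combine the sparseness forced by the exponential-SAT-complexity hypothesis with Properties~1--4 to locate a surviving ``trigger'' gate, then extract the protected pattern $\mathtt{PP}$ (and hence the secret bits it encodes) via ATPG for a stuck-at-0 fault on that gate's output. The secret recovered is partial when $PIP$ is a strict subset of the key-controlling ports (e.g., when some bits are absorbed into gates that survived logic merging) and complete when $PIP$ covers every relevant bit. The proof is essentially a reduction: exponential SAT complexity $\Rightarrow$ sparse output-corruption structure $\Rightarrow$ a detectable stuck-at-0 signature at a structurally constrained internal net.

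First I would invoke Property~1 to guarantee that $\mathcal{T}(n_i,f,\mathtt{d})$ returns a non-empty pattern set for every wire and fault polarity, since testability is a manufacturing requirement the defender cannot violate. Next, I would use Properties~2 and~3 as a structural filter: enumerate, with a single topological sweep, all gates $v\in V$ whose $\mathit{startpoints}(v)\subseteq PIP$ and $\mathit{endpoints}(v)\subseteq POP$. By the exponential-SAT-complexity hypothesis the corruption induced by the \textit{Perturb} algorithm is confined to a set of size $\mathtt{N}$ with $\mathtt{N}\ll 2^{|K|}$; consequently at least one internal net must realize this sparse trigger, and such a net must depend only on inputs reachable from the restore unit (hence from $PIP$) and must fan out only into the corrupted outputs $POP$. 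Any gate failing the filter is excluded. Finally, I would apply Property~4: run $\mathcal{T}(v,0,\mathtt{d})$ with $\mathtt{d}>\mathtt{N}$ and retain only those $v$ whose s-a-0 test set has cardinality exactly $\mathtt{N}$. The returned patterns are precisely the protected patterns $\mathtt{PP}$; when the secret is hard-coded through $KI$ (SARLock-like constructions), these patterns are the key bits themselves, while when hard-coded through $PI$ (SFLL-like constructions), they can be composed with a single oracle query on $\mathbb{C}_{act}$ to read off the corresponding key bits.

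The hard part will be the last filter, because logic synthesis and technology mapping can fragment, merge, or re-encode the restore unit so that no single post-synthesis gate still exhibits the clean trigger signature of exactly $\mathtt{N}$ activating patterns. My argument here would rely on a preservation claim: synthesis is functionality-preserving, so the set of input patterns on which $\mathcal{C}_{lock}(\cdot,\mathtt{K})$ differs from $\mathcal{C}_{orig}$ remains of size $\mathtt{N}$, and the Boolean function ``locked output differs from original'' must be computed somewhere in the gate-level netlist. Tracing this function along the topologically sorted fan-in intersected with the cones defined by Properties~2--3, at least one surviving gate $v^{\ast}$ must attain activation count $\mathtt{N}$; if merging makes several gates share this count, each yields a consistent partial assignment to $PIP$, and their union recovers either the full secret or a partial one that can be closed out by a bounded oracle-guided enumeration over the unresolved bits---sufficient for the ``partial (or complete)'' conclusion of the theorem.
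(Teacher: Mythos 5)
Your proposal is correct and takes essentially the same route as the paper: both arguments rest on the existence of key-revealing logic gate(s) that influence $POP$, are testable for both stuck-at faults, and activate only for $\mathtt{PP}$, so that the test pattern returned by $\mathcal{T}$ for the correct fault polarity necessarily carries partial (or complete) traces of the hard-coded secret. The paper's own proof is in fact terser --- it asserts these properties of the key-revealing gate(s) directly on $\mathcal{C}_{mod}$ and does not argue about post-synthesis survival or cardinality-$\mathtt{N}$ filtering; your use of Properties~2--4 and the synthesis-preservation discussion reproduces material the paper places in its Methodology section rather than inside the proof itself.
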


\begin{proof}
Consider a $\mathtt{PP}$ is hard-coded in the original circuit ($\mathcal{C}_{orig}$) to obtain a modified circuit ($\mathcal{C}_{mod}$) such that the functionality of $\mathcal{C}_{mod}$ differs from $\mathcal{C}_{orig}$ only for $\mathtt{PP}$.
Irrespective of how $\mathcal{C}_{mod}$ is constructed to achieve exponential complexity (either by augmenting a point-function, diversifying point-function, or removing logic), it will constitute key-revealing logic gate(s) that will (i)~influence corruption of $PO$(s), (ii)~be testable for both faults, and (iii)~be activated only for $\mathtt{PP}$. 
As the effect of key-revealing logic gate(s) can be observed at $PO$, a stuck-at fault model can be used to generate a $\mathtt{TP}$ that controls the output of the key-revealing logic gate(s).
Considering Equation~\ref{eq:SL}, a $\mathtt{TP}$ is generated to test stuck-at fault $\mathtt{f}$, where $\mathtt{f}$ $\in$ \{0,1\} at the output of key-revealing logic gate(s), $cn$, in $\mathcal{C}_{mod}$.
For the correct fault, the computed $\mathtt{TP}$ contains partial (or complete) traces of the hard-coded secret.
\begin{align}\label{eq:SL}
	\mathtt{TP} \gets \mathcal{T}(cn,f,1); \hspace{10mm} \mathtt{TP} \in \mathtt{K}
\end{align}
Thus, $\mathtt{TP}$ generated using the stuck-at fault model can recover (either complete or partial) traces of the hard-coded secret.
\end{proof}

\begin{figure*}[htb]
    \centering
    \includegraphics[width=0.925\textwidth]{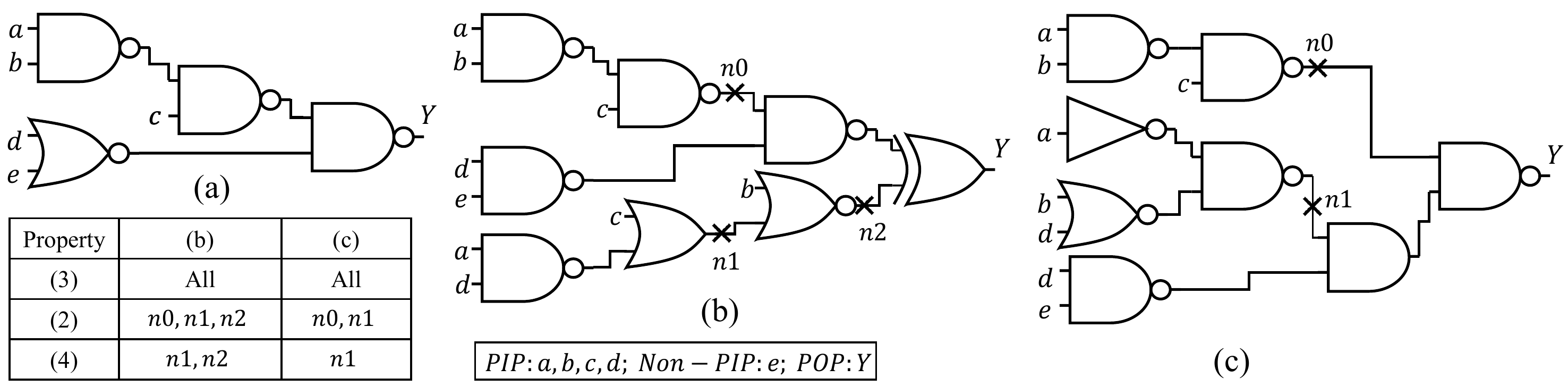}
    \caption{Example showing identification of key-revealing logic gate(s) using properties of hard-coded PSLL techniques.
    (a)~Original circuit 
    (b)~Modified circuit (with regards to functionality) for 
    $\mathtt{PP} = \langle1,0,0,1\rangle$ for $PIP$ $\langle a,b,c,d\rangle$.
    (c)~Modified circuit for 
    $\mathtt{PP} = \langle0,0,0,0\rangle$ for $PIP$ $\langle a,b,c,d\rangle$.
    Table outlines the key-revealing logic gate(s) identified per property. 
    For (b), there are two key-revealing logic gate(s) and for (c), there is only one key-revealing logic gate. 
    }
    \label{fig:KR_HC_eg}
\end{figure*}

\noindent\textbf{Idea.} As discussed, key-revealing logic gate(s) shall induce output corruption at $POP$ on the application of $\mathtt{PPs}$. 
Therefore, the effect of $\mathtt{PPs}$ can also be observed on internal nets of the key-revealing logic gate(s).
Such nets shall remain dormant (inactive) for non-$\mathtt{PPs}$ and activate only for $\mathtt{PPs}$.
Hence, an attacker can resort to generating a $\mathtt{TP}$ to detect s-a-0 (s-a-1) on nets in the key-revealing logic gate(s).
Such a $\mathtt{TP}$ will include traces of the hard-coded $\mathtt{PP}$.

\noindent\textbf{Example 1.} Consider $\mathcal{C}_{orig}$ in Fig.~\ref{fig:KR_HC_eg}(a) and $\mathcal{C}_{mod}$ in Fig.~\ref{fig:KR_HC_eg}(b) that is modified for the input pattern $\langle a,b,c,d\rangle = \langle1,0,0,1\rangle$.
Using property 3, all the nets are classified as key-revealing logic gate(s) since 
they affect the $POP$ ($Y$).
Using property 2, we perform a pruning operation on key-revealing logic gate(s).
This returns $n0$, $n1$, and $n2$ since they are connected to more than $\mathtt{N}/2$ $PIPs$, where $\mathtt{N}$ is the length of the $\mathtt{PP}$.
Using property 4, we further prune key-revealing logic gate(s) to $n1$ and $n2$.
$\mathcal{T}$ generates $\langle a,b,c,d,e\rangle = \langle1,\texttt{x},0,1,\texttt{x}\rangle$ for net $n1$ and $\langle a,b,c,d,e\rangle = \langle1,0,0,1,\texttt{x}\rangle$ for net $n2$.
$\mathtt{x}$ denotes undeciphered bit.
Thus, we recover the hard-coded secret $\langle a*,b*,c*,d*\rangle = \langle1,0,0,1\rangle$ in an oracle-less setting by testing for s-a-0 fault at $n2$.

\noindent\textbf{Example 2.} Consider $\mathcal{C}_{mod}$ in Fig.~\ref{fig:KR_HC_eg}(c) that is modified for the input pattern $\langle a,b,c,d\rangle = \langle0,0,0,0\rangle$.
All the nets are classified as key-revealing logic gate(s) using property 3.
Performing a pruning operation using property 2 returns $n0$ and $n1$. 
Using property 4, we are only left with $n1$.
Invoking $\mathcal{T}$ for $n1$ generates a $\mathtt{TP}$ as $\langle0,0,\mathtt{x},0,\mathtt{x}\rangle$.
Thus, we extracted a partial hard-coded secret key value ($\mathtt{K}$) as $\langle0,0,\mathtt{x},0\rangle$ by testing for stuck-at faults at $n1$.
The undeciphered bit ($\mathtt{x}$) can be revealed by querying an oracle through an I/O-based attack (e.g., SAT-based attack~\cite{subramanyan15}).
This example illustrates the scenario where an oracle is required to recover the secret key.

\subsection{Algorithm} 
\label{sec:algorithm_hard_coded_PSLL}

\begin{algorithm}[tb]
\footnotesize
\SetKwInput{KwInput}{Input}               
\SetKwInput{KwOutput}{Output} 
\DontPrintSemicolon
  
  \KwInput{Locked circuit ($\mathcal{C}_{lock}$), Oracle ($\mathbb{C}_{act}$), Primary inputs ($PI$), Key-inputs ($KI$), Number of protected patterns ($|\mathtt{PP}|$)}
  \KwOutput{Secret key ($\mathtt{K}$)}

\SetKwFunction{FKR}{KeyRecovery\_HC}
\SetKwFunction{FPM}{KeyInputMapping}
\SetKwFunction{FKE}{KeyExtraction}
\SetKwFunction{FSA}{ExtractNets}
\SetKwFunction{FCN}{CandidateNets}

\SetKwProg{Fn}{procedure}{:}{}
    \Fn{\FSA{keyin,in}}{
        $POP$ $\gets$ $\mathit{endpoints(keyin)}$\;
        $N$ $\gets$ $\mathit{net\_conn(POP)}$\; 
        \KwRet $N$\;
    }
  
\SetKwProg{Fn}{procedure}{:}{}
    \Fn{\FPM{N}}{
        $PIP$,$KEY$ $\gets$ $\emptyset$ \;
		\For{$net$ $\in$ $N$}{
			$ins$ $\gets$ $\mathit{startpoints(net)}$\;
			\If{$|ins|$ = $2$}{
				($key$,$pip$) $\gets$ $ins$ where $key$ $\in$ $KI$\; 
				$PIP$.$\mathit{append(pip)}$\;
				$KEY$.$\mathit{append(key)}$\; 
			}
		}
		\KwRet $PIP$, $KEY$\;
    }  
	
\SetKwProg{Fn}{procedure}{:}{}
    \Fn{\FCN{N,PIP}}{
        $CN$ $\gets$ $\emptyset$ \;
		\For{$net$ $\in$ $N$}{
			$in$ $\gets$ $\mathit{startpoints(net)}$\;
			\If{$in$ $\in$ $PIP$} {
				$CN$.$\mathit{append(net)}$\;
			}
		}	
        \KwRet $CN$\;
    }  

\SetKwProg{Fn}{procedure}{:}{}
	\Fn{\FKE{$\mathtt{TP}$,$PIP$,$\mathtt{K}$,$PI$}}{
		\For{$k$ in \{0,$|\mathtt{K}|$\}} {
			$pip$ $\gets$ $PIP$[$k$]\;
			$idx$ $\gets$ $\mathit{get\_index(pip,PI)}$\;
            $val$ $\gets$ $\mathtt{TP}$[$idx$]\;
			\If{$val$ $\neq$ $\mathtt{x}$} {
				$key$[$k$] $\gets$ $val$\;
			}	
		}
		\KwRet $key$\;
	}

\SetKwProg{Fn}{function}{:}{\KwRet}
	\Fn{\FKR}{
        \{${N}$\} $\gets$ \texttt{ExtractNets($KI$,$PI$)}\;
		\{${KEY}$,${PIP}$\} $\gets$ \texttt{KeyInputMapping($N$)}\;
        \{$CN$\} $\gets$ \texttt{CandidateNets($N$,$PIP$)}\;
        \For{$cn$ in $CN$ }{
            $\mathtt{TP}$ $\gets$ $\mathcal{T}$($cn$,$i$,$|\mathtt{PP}|$+1); $i$ $\in$ \{0,1\}\;
            \If{$|\mathtt{TP}|$ $=$ $|\mathtt{PP}|$}{
				 $\mathtt{K}$ $\gets$ \texttt{KeyExtraction($\mathtt{TP}$,$PIP$,$KEY$,$PI$)}\;
                \If{$|\mathtt{K}|$ $<$ $|KI|$}{
                    $key\_final$ $\gets$ $\mathit{SAT(\mathcal{C}_{lock},\mathbb{C}_{act},\mathtt{K})}$\;
                    $\mathtt{K}$ $\gets$ $key\_final$\;
                }
                \KwRet $\mathtt{K}$\;
            }
        }
	}
\caption{{Attack on hard-coded PSLL techniques}
\label{alg:KR_HC}}
\end{algorithm}

We outline our attack on hard-coded PSLL techniques in Alg.~\ref{alg:KR_HC}.
It consists of five steps, viz., (i)~extraction of nets in the $POP$, (ii)~obtaining mapping between $KIs$ and $PIPs$, (iii)~identification of key-revealing logic gate(s), (iv)~generation of $\mathtt{TPs}$, and (v)~recovery of the secret key.
\texttt{ExtractNets()} returns the list of nets lying in the $POP$ (lines 1--4).
\texttt{KeyInputMapping()} extracts the mapping between $KIs$ and $PIPs$ from the key-controlled restore unit (lines 5--13).
\texttt{CandidateNets()} returns the
key-revealing logic gate(s) that satisfy the 
properties discussed in \S\ref{sec:methodology} (lines 14--20).
Next, the algorithm utilizes $\mathtt{TPs}$ to recover the hard-coded $\mathtt{PP}$.
$\mathcal{T}$ generates $\mathtt{TPs}$ that detects s-a-0 and s-a-1 for key-revealing logic gate(s).
Given a PSLL technique, $\mathcal{T}$ generates $\mathtt{d}$ $\mathtt{TPs}$. 
For example, when considering SFLL-HD$^0$, $\mathcal{T}$ is queried to generate \textit{exactly} two $\mathtt{TPs}$ for key-revealing logic gate(s).
The correct key-revealing logic gate(s) will return \textit{only} one $\mathtt{TP}$ and any key-revealing logic gate(s) generating more than one $\mathtt{TP}$ is(are) discarded.
$\mathtt{TP}$ is fed to $\mathtt{KeyExtraction()}$ to extract key-bits corresponding to $PIPs$ (lines 21--28).
If all key-bits are recovered from $\mathtt{TPs}$, the algorithm outputs it as the secret key.
However, there might be scenarios where partial key-bits are recovered from multiple $\mathtt{TPs}$.
In such cases, partial key-bits can be combined from the multiple $\mathtt{TP}$.
Furthermore, if $\mathtt{TPs}$ cannot recover some key-bits, $\mathit{SAT()}$ is invoked to recover them.
Finally, the algorithm merges the partial key recovered from $\mathtt{TPs}$ and the key returned from the $\mathit{SAT()}$ to output the final key.

\begin{figure}
    \centering
    \includegraphics[width=0.9\textwidth]{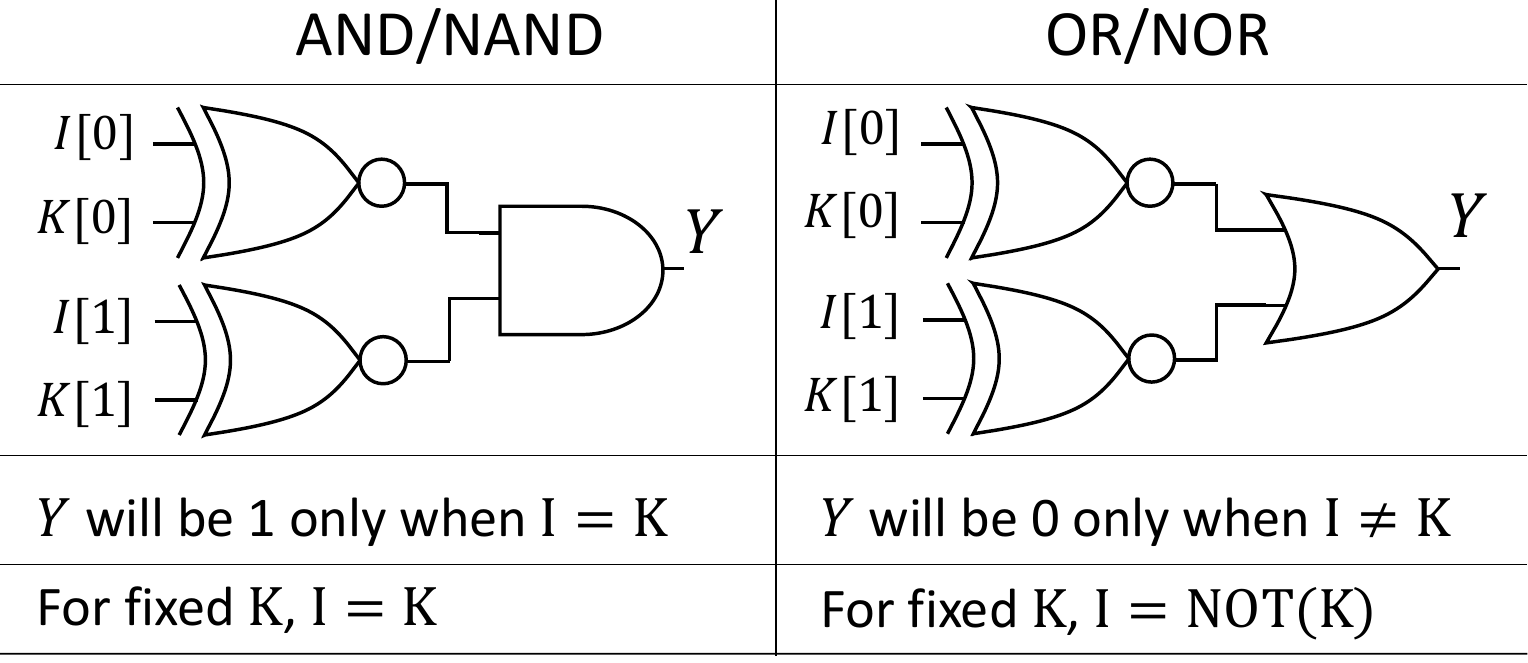}
    \caption{Key-recovery example where point-functions are diversified.}
    \label{fig:KR_HC_CAC_DTL}
\end{figure}

\subsection{Extension to Other Hard-coded PSLL Techniques}
\label{sec:extenstion_hard_coded_PSLL}

Recall that \textbf{C2} outlines the challenge of developing a generic key-recovery attack agnostic to the construction of a hard-coded PSLL technique (\S\ref{sec:hard_coded_RCs}).
Next, we discuss the (minor) modifications we implemented to address \textbf{C2}.

Some hard-coded PSLL techniques (e.g., CAC-DTL~\cite{CAC}, and variants of SFLL~\cite{yasin_CCS_2017,sengupta2020truly}) hard-code the secret through $PIs$ in the original circuit, while others (e.g., SARLock-DTL~\cite{CAC}, ECE~\cite{shen2018comparative}) hard-code the secret through $KIs$ in the restore unit.
Our attack successfully recovers the secret key across both classes of techniques.
Since the $\mathtt{PP}$ corresponds to $PIPs$ for techniques such as CAC, CAC-DTL, and SFLL variants, the attack extracts key-bit corresponding to $PIPs$ from $\mathtt{TPs}$.
For SARLock, SARLock-DTL, and ECE, the $\mathtt{PP}$ corresponds to $KIs$; hence, instead of $PIPs$, key-bits corresponding to $KIs$ are extracted from $\mathtt{TP}$.
This is achieved by removing lines 22--23 in Alg. 1 and modifying line 24 to $val$ $\gets$ $\mathtt{TP}$[$k$].

The second scenario we address towards the generality of our key-recovery attack is to account for the number of $\mathtt{PPs}$.
While SFLL-HD$^0$ protects exactly one $\mathtt{PP}$, SFLL-flex protects multiple $\mathtt{PPs}$.
Our attack addresses this using property 4 and utilizes a user-configurable parameter \begin{math} \mathtt{TP} \end{math} that constrains the $\mathcal{T}$ tool to produce exactly $\mathtt{N}$ test patterns.

Finally, in scenarios where a designer diversifies the hard-coded point function using OR/NOR gates~\cite{CAC}, we recover the secret key as follows.
When a designer replaces/diversifies some AND gates in hard-coded point-function with NAND/OR/NOR gates, we slightly modify the key extraction strategy from $\mathtt{TPs}$.
Fig.~\ref{fig:KR_HC_CAC_DTL} illustrates the relation between $I$ and hard-coded key $K$ for two examples.
When AND gates are replaced with NAND gates, there is no change in $\mathtt{KeyExtraction()}$.
However, when AND gates are replaced with OR/NOR gates, the relation between $K$ and $I$ changes, as shown in column 2, row 4.
Thus, with modifications to $\mathtt{KeyExtraction()}$, our attack recovers the secret key for different versions of DTL.
\section{Attack on Non-hard-coded PSLL Techniques}
\label{sec:non_hardcoded_PSLL}

In this section, we develop an attack that recovers the secret key from the hardware implementation of non-hard-coded PSLL techniques.
The problem formulation is the same as mentioned in \S\ref{sec:hardcoded_PSLL} and is omitted here.

\subsection{Challenges}

The construction of non-hard-coded PSLL techniques consists of a key-controlled locking unit appended to the original circuit via one critical wire $cn$ (Fig.~\ref{fig:PSLL_categories}(b)).
It should be noted that the secret resides in the key-controlled locking unit, and therefore, the first step for an attacker is to structurally analyze the locked circuit to identify the critical wire $cn$ that separates the locking unit from the original circuit.
However, as discussed in \S\ref{sec:hard_coded_RCs}, the complexity of identifying this wire (net) is challenging and further exacerbated due to synthesis-guided logic optimizations.
The next challenge is how to recover the secret key from the locking unit and how can a generic attack be developed for non-hard-coded PSLL techniques, independent of the construction of the locking unit.
To summarize, an attacker faces the following challenges.

\begin{itemize}

\item [\textbf{C3}] How to identify the locking unit from a locked circuit and recover the secret key from the locking unit?

\item [\textbf{C4}] How to develop a generic key-recovery attack for non-hard-coded PSLL techniques?

\end{itemize}

\subsection{Methodology}
\label{sec:methodology_non_hard_coded}

To address \textbf{C3}, the first step
entails identifying the locking unit from the locked circuit.
We outline a property stemming from the construction of non-hard-coded PSLL techniques.

\textbf{Property 5.} The wire $cn$ (that separates the locking unit from the original circuit) must be connected to all $KI$ and all $PIP$.
Also, $cn$ must influence the output corruption of $POP$. 
Formally, $\{KI,PIP\} = \mathit{startpoints}(cn)$ and $POP = \mathit{endpoints}(cn)$.
If multiple candidates for $cn$ exist, then the wire closest (shortest distance measured in levels of logic) to the $KI$ is chosen for extracting the locking unit.

After successfully extracting the locked unit, the next step involves recovery of the secret key from the locking unit.
To that end, we first provide the definition of key-gate mapping.

Recall the construction of non-hard-coded PSLL techniques where a key-controlled locking unit is XORed with the original circuit to obtain a locked circuit (Fig.~\ref{fig:PSLL_categories}(b)).
For Anti-SAT, blocks $f$ and $g$ are complementary to each other and denoted by $g$ and $\overline{g}$.
The blocks are controlled by the same $PIPs$ but different $KIs$ ($K$ = \{$\mathcal{K}_1$, $\mathcal{K}_2$\}), where $\mathcal{K}_j$ consists of key-inputs $k_{ji}$; $i \in \{0,\dots,|K/2|-1\}$, $j \in \{1,2\}$.
Locking unit can be formally defined as $Y$ = $g(PIP_{i}\oplus k_{1i}\oplus r_{1i})$ $\wedge$ $\overline{g(PIP_{i}\oplus k_{2i}\oplus r_{2i})}$, $\forall i \in \{0,|K|/2-1\}$; $r$ as 0(1) indicates an X(N)OR key-gate.
This construction forces I/O-based attacks to query at least $2^{|K|/2}$ input patterns, where $|K|$ = $|\mathcal{K}_1|$ + $|\mathcal{K}_2|$ is the total key-size.
For instance, for a circuit locked using Anti-SAT with key-size $|K|$, it takes $2^{|K|/2}$ queries to recover the secret key; each query to an oracle eliminates $2^{|K|/2}-1$ incorrect keys.
This means, there are $2^{|K|/2}$ correct keys, and when analyzed further, we identify a unique mapping between $\mathcal{K}_1$ and $\mathcal{K}_2$ portions of the correct keys, \textit{i.e.,} $\mathcal{K}_1$ $\oplus$ $\mathcal{K}_2$ across all the correct keys is unique.
We define this unique mapping as \textit{key-gate mapping} (KGM) and seek to find this mapping to recover the secret key(s) from the hardware implementation of a non-hard-coded PSLL technique. 
Finally, we introduce our last property, which considers the gate-type for $r$ (X(N)OR) in the successful recovery of the unique mapping. 
Unlike the properties discussed so far, this is an assumed property.

\textbf{Property 6.} Each $KI$ must drive exactly one X(N)OR logic gate. Formally, \texttt{Pr}$[\mathit{gate\_conn(k_i)}$ = X(N)OR$]$ = $1$, $k_i \in KI$.

\begin{theorem}
\label{NHC_theorem}
A key-gate mapping exists between the two sets of keys in non-hard-coded PSLL techniques.
\end{theorem}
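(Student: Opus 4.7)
The plan is to analyze the condition that forces the output wire $cn$ of the locking unit to remain constant $0$ on every primary-input pattern, since only then does the locked circuit (which XORs $cn$ into $\mathcal{C}_{orig}$) compute the original function. Writing the locking unit as $Y = g(PIP \oplus \mathcal{K}_1 \oplus r_1) \wedge \overline{g(PIP \oplus \mathcal{K}_2 \oplus r_2)}$ as recalled just above the theorem, I would define the set of correct keys as $\{(\mathcal{K}_1,\mathcal{K}_2) : Y(PIP,\mathcal{K}_1,\mathcal{K}_2)=0 \text{ for all } PIP\}$ and exhibit a bijection from this set to $\{0,1\}^{|K|/2}$, with the bijection revealing the claimed mapping.

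First I would prove sufficiency. A sufficient condition for $Y \equiv 0$ is that the two arguments to $g$ coincide for every $PIP$, i.e.\ $PIP \oplus \mathcal{K}_1 \oplus r_1 = PIP \oplus \mathcal{K}_2 \oplus r_2$. The $PIP$ terms cancel, and this simplifies bit-wise to $\mathcal{K}_1 \oplus \mathcal{K}_2 = r_1 \oplus r_2$. Since $r_{1i}, r_{2i} \in \{0,1\}$ are fixed at synthesis time by whether each key-input feeds an XOR or XNOR gate (Property~6), the vector $r_1 \oplus r_2$ is a fixed public constant. The mapping $\phi(\mathcal{K}_1) = \mathcal{K}_1 \oplus (r_1 \oplus r_2)$ is therefore a candidate key-gate mapping, and every pair $(\mathcal{K}_1, \phi(\mathcal{K}_1))$ is a correct key.

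Next I would prove necessity. The exponential SAT-complexity of the construction forces the locking unit to behave as a point-function (each oracle query eliminates at most one incorrect key), so $g$ must attain both values on its effective domain. I would then argue that if $(\mathcal{K}_1, \mathcal{K}_2)$ is correct but $\mathcal{K}_1 \oplus \mathcal{K}_2 \neq r_1 \oplus r_2$, one can choose $PIP$ so that the two effective arguments to $g$ differ on a bit where $g$ is sensitive, producing $g(\cdot)=1$ on the first branch and $g(\cdot)=0$ on the second (hence $\overline{g(\cdot)}=1$), contradicting $Y \equiv 0$. Counting then yields exactly $2^{|K|/2}$ correct keys, all sharing the same $\mathcal{K}_1 \oplus \mathcal{K}_2$, matching the number of correct keys produced by the I/O-based attack noted just before the theorem.

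The main obstacle is making the necessity argument robust across the full range of non-hard-coded variants --- particularly Gen-Anti-SAT (Non-comp.), where $f \neq \overline{g}$. The $PIP$-cancellation still holds, but the sensitivity argument must be restated in terms of $f \wedge \overline{g}$ being a non-trivial point-function. I would handle this by invoking the defining property of these schemes (corruption concentrated on a single pattern) to conclude that matching the two branches remains the only way to keep $Y \equiv 0$, yielding a KGM of the same form modulo a fixed offset determined by the chosen $f, g$ and the XOR/XNOR pattern $r_1, r_2$.
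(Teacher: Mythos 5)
Your proposal is correct and follows essentially the same route as the paper: the $PIP$ terms cancel bit-wise, forcing $\mathcal{K}_1 \oplus \mathcal{K}_2 = r_1 \oplus r_2$, which is exactly the fixed key-gate mapping determined by the XOR/XNOR gate types. The paper simply asserts the necessity direction (``$Y$ \dots will only be $0$ when the two arguments to $g$ coincide'') without the sensitivity and counting arguments you supply, so your version is a more rigorous rendering of the same proof rather than a different one.
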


\begin{proof}
As per the aforementioned definition of locking unit, $Y$ depends on $PIPs$ and $KIs$ ($k_{1i}$ and $k_{2i}$) and will only be $0$ when ($PIP_i \oplus k_{1i}\oplus r_{1i}$) equals ($PIP_i \oplus k_{2i}\oplus r_{2i}$).
Operation $k_{ji}\oplus r_{ji}$ denotes either XOR or XNOR key-gate $\forall j \in \{1,2\}$ and the corresponding key-values can be either $\langle0,0\rangle$ or $\langle1,1\rangle$ when $r_{1i}$ $=$ $r_{2i}$, or $\langle0,1\rangle$ or $\langle1,0\rangle$ when $r_{1i}$ $\neq$ $r_{2i}$. 
Thus, there is a definite mapping between the key-gates in $\mathcal{K}_{1}$ and $\mathcal{K}_2$ bins, which can derive the secret key(s).
\end{proof}

In a pre-synthesized locked circuit, the construction of the non-hard-coded PSLL technique is retained ($g$ $\wedge$ $\overline{g}$).
Thus, the correlation between individual key-inputs can be derived.
For example, consider Fig.~\ref{fig:antisat_block}(a), 
$k0$ is correlated with $k4$ since both the $KIs$ are XORed with the same $PIP$ $I0$. 
Similarly $k2$ is correlated with $k6$ since they are both XORed with the same $PIP$ $I2$.
Subsequently, all the correlated $KIs$ are distributed into two bins, $\mathcal{K}_1$ and $\mathcal{K}_2$.
Once the correlation between $KIs$ is identified, and the subsequent binning process is completed, our KGM attack extracts the following attributes per $KI$.

\begin{itemize}
\item Bin it belongs to ($\mathcal{K}_1$ or $\mathcal{K}_2$)

\item Gate it is connected to (XOR or XNOR)?

\item Is there an inversion (or not) on its output path?

\end{itemize}

Table~\ref{tab:key_mapping} describes the recovery of the secret key using the aforementioned attributes.
Considering columns 2 and 3, keeping key-bin and gate-type constant, the effect of inversion can be observed on the key-value. 
Similarly, consider columns 6 and 8; the effect of gate-type can be observed on the key-value, while with columns 5 and 9, we can observe the effect of key-bin on the key-value. 
Next, we discuss the application of the KGM attack using two examples for Anti-SAT.

\begin{figure}[tb]
\centering
\includegraphics[width=\textwidth]{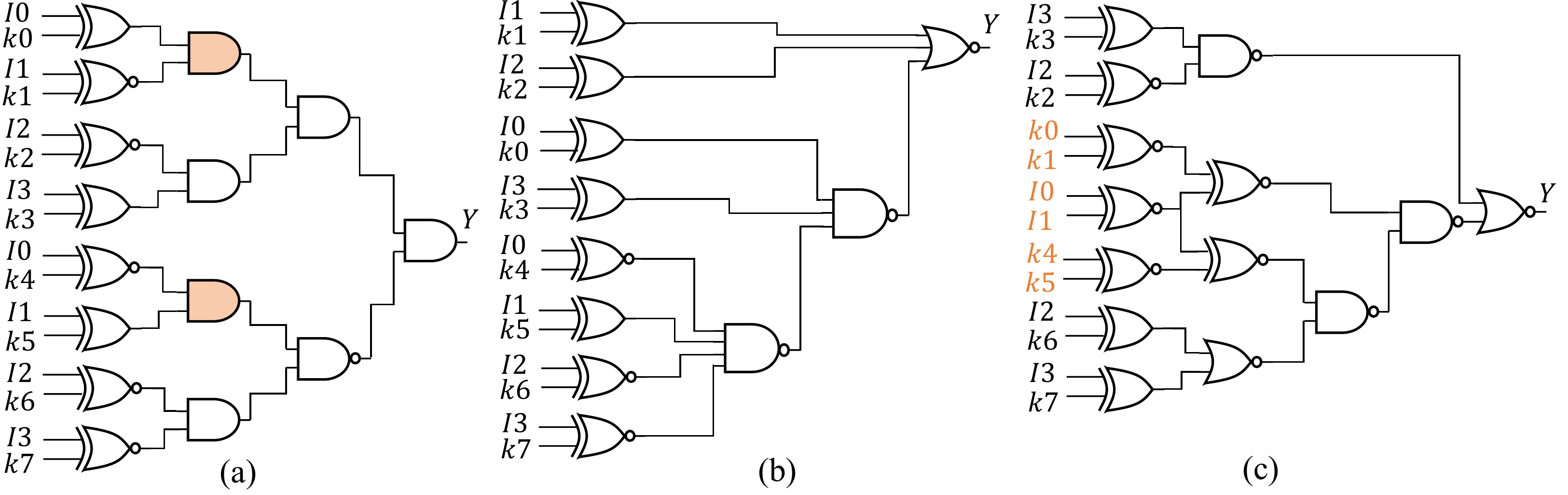}
\caption{a)~Pre-synthesis Anti-SAT block. (b)~Post-synthesis Anti-SAT block (all standard gates). (c)~Post-synthesized Anti-SAT-DTL block when gates in orange in (a) are replaced with XOR gates.
}
\label{fig:antisat_block}
\end{figure}

\noindent\textbf{Example 3.} Consider the pre-synthesized Anti-SAT locking unit in Fig.~\ref{fig:antisat_block}(a), where $k0$ and $k4$ are connected to an XOR and XNOR gate. 
As per the proof of Theorem~\ref{NHC_theorem}, key-values for these two key-bits should be either $\langle0,1\rangle$ or $\langle1,0\rangle$.
Next, we check the key-values obtained using our attack.
Before applying key-gate mapping, we distribute the $KIs$, $k0$ and $k4$, corresponding to the same $PI$ ($I0$), into two bins, $\mathcal{K}_1$ and $\mathcal{K}_2$.
Observe that $k4$ sees inversion on its signal path (belongs to $\overline{g}$), whereas $k0$ sees no inversion (belongs to $g$). 
Thus, as per Table~\ref{tab:key_mapping}, $k0$ is 0 and $k4$ is 1, consistent with our result above. 
Note that the key-mapping between the two sets must be unique and hence there are two correct values corresponding to $k0$ and $k4$, $\langle0,1\rangle$ and $\langle1,0\rangle$.
Similarly, we recover the remaining key-values, leading to the secret key $\langle k0,k1,k2,k3,k4,k5,k6,k7\rangle$ as $\langle0,1,1,0,1,0,1,1\rangle$ and the secret mapping between $\mathcal{K}_{1}$ and $\mathcal{K}_{2}$ ($\mathcal{K}_{1}$ $\oplus$ $\mathcal{K}_{2}$) as $\langle1,1,0,1\rangle$.
All the (2$^{4}$) keys satisfying this mapping are the secret keys.

\noindent\textbf{Example 4.} Consider the post-synthesized Anti-SAT locking unit in Fig.~\ref{fig:antisat_block}(b), where the $KIs$ are connected to XOR and XNOR gates.
First, we perform key-binning by distributing the $KIs$ into two bins.
Thus, \{$k0$, $k1$, $k2$, $k3$\} are binned into $\mathcal{K}_{1}$, while \{$k4$, $k5$, $k6$, $k7$\} are binned into $\mathcal{K}_{2}$. 
We utilize the strategy outlined in Table~\ref{tab:key_mapping} and traverse the signal path of the logic gates connected to each $KI$ to keep track of any signal inversions.
Note that NAND/NOR gates also account for inversions in addition to INV gates. 
We recover the secret key value $\mathtt{K}$ as $\langle0,1,1,0,1,0,1,1\rangle$ by scrutinizing the attributes of each $KI$ as per Table~\ref{tab:example3}.

\noindent\textbf{Corner cases.} To address \textbf{C4}, we identify various corner cases and address them in our attack.
Some synthesis-induced optimizations
may challenge an attacker in successfully identifying the correlation between $KIs$.
We can address this challenge by performing another round of logic optimization
on the extracted locking unit using only standard logic gates.
Once correlation between the $KIs$ is obtained and correlated $KIs$ are distributed into distinct bins, the value of key-bits can be recovered using Table~\ref{tab:key_mapping} in an oracle-less setting.

\noindent\textbf{Example 5.} Consider the post-synthesized circuit in Fig.~\ref{fig:antisat_block}(c), which is obtained when gates marked in orange in Fig.~\ref{fig:antisat_block}(a) are replaced with XOR gates (Anti-SAT-DTL) and subsequently synthesized. 
First, we perform key-binning by distributing the $KIs$ into two bins depending on their connected $PIPs$. 
Thus, \{$k0$, $k1$, $k2$, $k3$\} are binned into $\mathcal{K}_{1}$, while \{$k4$, $k5$, $k6$, $k7$\} are binned into $\mathcal{K}_{2}$.
However, \{$k0$, $k1$, $k4$, $k5$\} do not have a definite correlation.
We utilize the strategy outlined in Table~\ref{tab:key_mapping} for the correlated $KIs$.
Note that due to inconclusive correlation between key-inputs \{$k0$, $k1$, $k4$, $k5$\}, key-bits corresponding to $k0$ and $k1$ remain undeciphered using our KGM attack.
We assign $k4$ and $k5$ random key-bits since they lie in a different bin than $k0$ and $k1$.
We invoke the SAT-based attack to recover these two key-bits.
We successfully recover the secret key value $\mathtt{K}$ as $\langle1,1,1,0,0,0,1,1\rangle$ by scrutinizing the attributes of correlated $KIs$ as per Table~\ref{tab:example5} and invoking SAT-based attack for the undeciphered key-bits.

Our KGM attack recovers the key in an oracle-less setting if all $KIs$ are successfully correlated and binned into distinct sets.
However, assume $KIs$ are (i)~connected to the same XOR/XNOR gate, (ii)~connected to gates other than XOR/XNOR, or (iii)~driving multiple logic gates.
In such cases, binning is unsuccessful, and the attack annotates them as undeciphered key-bits ($\mathtt{x}$).
Our KGM attack recovers these key-bits by executing the SAT-based attack~\cite{subramanyan15}.

\begin{table}[tb]
\footnotesize
\caption{Key-gate mapping table. Key-bit value is dependent on key bin, gate-type, and inversion in the fanout}
\setlength{\tabcolsep}{0.9mm}
\label{tab:key_mapping}
\begin{tabular}{c|aacbdcdb}
& \multicolumn{8}{c}{\textbf{Key-inputs}}                                                              \\ \cline{1-9}
\textbf{Key-bin}												 	 & 1           & 1           & 1           & 1           & 2           & 2           & 2           & 2           \\     
\textbf{Gate-type}                                                   & XOR         & XOR         & XNOR         & XNOR         & XOR        & XOR         & XNOR        & XNOR        \\ 
\textbf{\begin{tabular}[c]{@{}c@{}}Inversion?\end{tabular}}          & No         & Yes         & No      & Yes         & No         & Yes       & No   & Yes         \\ 
\hline
\textbf{Key-value}                                                   & 0           & 1           & 1           & 0           & 1           & 0           & 0          & 1           \\ 
\hline
\end{tabular}
\end{table}

\begin{table}[tb]
\caption{Extracting attributes of each key-input and recovering secret key from locking unit for example 4.}
\label{tab:example3}
\setlength{\tabcolsep}{1.2mm}
\footnotesize
\begin{tabular}{c|cccccccc}
\textbf{Key-input}                                                   & \textbf{k0} & \textbf{k1} & \textbf{k2} & \textbf{k3} & \textbf{k4} & \textbf{k5} & \textbf{k6} & \textbf{k7} \\ \hline
\textbf{Key-bin}													 & 1           & 1           & 1           & 1           & 2           & 2           & 2           & 2           \\     
\textbf{Gate-type}                                                   & XOR         & XOR         & XOR         & XOR         & XNOR        & XOR         & XNOR        & XNOR        \\ 
\textbf{\begin{tabular}[c]{@{}c@{}}Inversion?\end{tabular}}          & No          & Yes         & Yes         & No          & Yes         & Yes         & Yes         & Yes         \\
\hline
\textbf{Key-value}                                                   & 0           & 1           & 1           & 0           & 1           & 0           & 1           & 1           \\ 
\hline
\end{tabular}
\end{table}

\begin{table}[t]
\caption{Extracting attributes of each key-input and recovering secret key from locking unit for example 5}
\label{tab:example5}
\setlength{\tabcolsep}{1.5mm}
\footnotesize
\begin{tabular}{c|cccccccc}
\textbf{Key-input}                                                   & \textbf{k0} & \textbf{k1} & \textbf{k2} & \textbf{k3} & \textbf{k4} & \textbf{k5} & \textbf{k6} & \textbf{k7} \\ \hline
\textbf{Key-bin}													 & 1           & 1           & 1           & 1           & 2           & 2           & 2           & 2           \\     
\textbf{Gate-type}                                                   & -         & -         & XOR         & XOR         & -        & -         & XNOR        & XNOR        \\ 
\textbf{\begin{tabular}[c]{@{}c@{}}Inversion?\end{tabular}}          & -          & -         & Yes         & No          & -         & -         & Yes         & Yes         \\
\hline
\textbf{Key-value}                                                   & $\mathtt{x}$           & $\mathtt{x}$           & 1           & 0           & 0           & 0           & 1           & 1           \\ 
\hline
\end{tabular}
\\[1mm]
- denotes unknown value
\end{table}

\subsection{Algorithm}
\label{sec:non_PSLL_algorithm}

We outline our generic attack on non-hard-coded PSLL techniques in Alg.~\ref{alg:KR_NHC}.
It consists of five steps, viz., (i)~identification of critical wire, (ii)~extraction of the locking unit, (iii)~re-synthesis of extracted logic cone, (iv)~extraction of attributes, and (v)~recovering key-bits.
Identification of the critical wire ($cn$) and subsequent extraction of the locking unit ($\mathcal{C}_{cn}$) takes place in \texttt{ExtractLogicCone()}.
The extracted logic cone, $\mathcal{C}_{cn}$, is re-synthesized using $\mathcal{S}$
with only the standard gates to generate $\mathcal{C}_{cn\_syn}$.
Next, \texttt{GetAttribute()} is executed on $\mathcal{C}_{cn\_syn}$ which extracts attributes such as the set each $KI$ belongs to, the type of logic gates connected to each $KI$, and the presence of inversion on the output path.
These attributes are passed to \texttt{KeyMapping()} to recover the secret key corresponding to the $KIs$.
Suppose the algorithm cannot conclusively determine the attribute of any $KI$. 
In that case, the value of that $KI$ is annotated as $\mathtt{x}$. 
Finally, we invoke $\mathit{SAT()}$ to recover the undeciphered $KIs$ using an oracle, $\mathcal{C}_{act}$.

\noindent\textbf{Time complexity.} Table~\ref{tab:time_complexity_paper} showcases the time complexity of functions and procedures used in our key-recovery attack algorithms (Alg.~\ref{alg:KR_HC} and~\ref{alg:KR_NHC}).
Although ATPG is an NP-complete problem, efficient heuristics have been developed for practical circuits, which reduces this complexity to polynomial in the number of gates in the circuits~\cite{prasad1999atpg}.
Synthesis approaches have undergone decades of research, with worst-case complexity being $\mathcal{O}(E^3)$~\cite{sasao1993logic}. 
In this work, since we use a commercial, closed-source synthesis tool (\textit{i.e.,} \textit{Synopsys DC}), the time complexity of our KGM attack algorithm cannot be conclusively obtained.
The overall time complexity of our attacks is contingent on the underlying structure (graph representation) and the number of gates in the locked circuit.

\begin{table}[ht]
\caption{Time complexity of the functions and procedures used in our key-recovery attack algorithms}
\footnotesize
\begin{tabular}{lc}
\hline
\textbf{Function}   & \textbf{Time complexity} \\ \hline \hline
$startpoints()$ & $\mathcal{O}(|V|+|E|)$ \\ \hline
$endpoints()$ & $\mathcal{O}(|V|+|E|)$ \\ \hline
$net\_conn()$ & $\mathcal{O}(|V|+|E|)$ \\ \hline
$get\_index()$ & $\mathcal{O}$(1) \\ \hline
$append()$ & $\mathcal{O}$(1) \\ \hline
$extract\_cone()$ & $\mathcal{O}(|V|+|E|)$ \\ \hline
$gate\_conn()$ & $\mathcal{O}$(1) \\ \hline
$tech\_mapping()$ & $\mathcal{O}$(1) \\ \hline
$fanout\_cells()$ & $\mathcal{O}(|V|+|E|)$ \\ \hline
$\texttt{ExtractNets()}$ & $\mathcal{O}(|V|+|E|)$ \\ \hline
$\texttt{KeyInputMapping()}$ & $\mathcal{O}(|V|+|E|)$ \\ \hline
$\texttt{CandidateNets()}$ & $\mathcal{O}(|N|*(|V|+|E|))$ \\ \hline
$\texttt{KeyExtraction()}$ & $\mathcal{O}(|K|)$ \\ \hline
$\texttt{ExtractLogicCone()}$ & $\mathcal{O}(|N|*(|V|+|E|))$ \\ \hline
$\texttt{GetAttribute()}$ & $\mathcal{O}(|V|+|E|)$ \\ \hline
$\texttt{KeyMapping()}$ & $\mathcal{O}$(1) \\ \hline
\end{tabular}
\label{tab:time_complexity_paper}
\\
[1ex]
V: vertices (gates) \hspace{5pt} E: edges (nets) \hspace{5pt}
N: candidate nets (N $\subset$ E)
\end{table}

\begin{algorithm}[ht]
\footnotesize
\SetKwInput{KwInput}{Input}                
\SetKwInput{KwOutput}{Output}
\DontPrintSemicolon
  
  \KwInput{Locked design ($\mathcal{C}_{lock}$), Oracle ($\mathbb{C}_{act}$), Key-inputs ($KI$)}
  \KwOutput{Secret key (\texttt{K})}

  \SetKwFunction{FEC}{ExtractLogicCone}
  \SetKwFunction{FGA}{GetAttribute}
  \SetKwFunction{FKM}{KeyMapping}
  \SetKwFunction{FKR}{KeyRecovery\_NHC}

\SetKwProg{Fn}{procedure}{:}{}
  \Fn{\FEC{c,keyin}}{
        $POP$ $\gets$ $\mathit{endpoints}(keyin)$\;
        $N$ $\gets$ $\mathit{net\_conn}(POP)$\;        
        \For{$net$ $\in$ $N$ }{
            $ins$ $\gets$ $\mathit{startpoints}(net)$\;
            \If{ ($keyin$ $\subseteq$ $ins$)} {
                $cn$ $\gets$ $net$\;
            }
        }
        $\mathcal{C}_{cn}$ $\gets$ $\mathit{extract\_cone}(\mathcal{C},cn)$\;
        \KwRet $\mathcal{C}_{cn}$\;
  }
  
\SetKwProg{Fn}{procedure}{:}{}
  \Fn{\FGA{key,PIP,gate}}{
		$bin$ $\gets$ $1$\;
		\{$ins$\} $\gets$ $\mathit{startpoints}(gate)$\;
		\If{$ins$ $\subseteq$ $PIP$} {
			$bin$ $\gets$ $2$\;	
			$PIP$.\texttt{append}($ins$)\; 
		} 
		$lib$ $\gets$ $\mathit{tech\_mapping}(gate)$\;
		$inv$ $\gets$ $0$\;
        \If{`INV' $\subseteq$ $\mathit{fanout\_cells(key)}$}{
            $inv$ $\gets$ $1$\;
        }
        \KwRet $bin$,$lib$,$inv$\;
  }  
  
\SetKwProg{Fn}{procedure}{:}{}
  \Fn{\FKM{bin, gateType, inv}}{
        $key\_value$ $\gets$ $0$\;
        \If{ ($bin$ = $1$ \& $inv$ = $1$ \& $lib$ = `XOR') $|$ ($bin$ = $2$ \& $inv$ = $0$ \& $lib$ = `XOR') $|$ ($bin$ = $1$ \& $inv$ = 0 \& $lib$ = `XNOR') $|$ ($bin$ = $2$ \& $inv$ = $1$ \& $lib$ = `XNOR')} {
            $key\_value$ $\gets$ $1$\;
        }
        \KwRet $key\_value$\;
}  

  \SetKwProg{Fn}{function}{:}{\KwRet}
  \Fn{\FKR}{
        $\mathcal{C}_{cn}$ $\gets$ \texttt{ExtractLogicCone($\mathcal{C}_{lock}$,$KI$)}\;
        $\mathcal{C}_{cn\_syn}$ $\gets$ $\mathcal{S}$($\mathcal{C}_{cn}$)\;
        $PIP$ $\gets$ $\emptyset$\;
        \For{$key$ in $KI$ }{
            $idx$ $\gets$ $\mathit{get\_index(key,KI)}$\;
			$key\_value[idx]$ $\gets$ $\mathtt{x}$\;
			$gate$ $\gets$ $\mathit{gate\_conn}(key)$\;
			\{$bin$,$lib$,$inv$\} $\gets$ \texttt{GetAttribute($key$,$PIP$,$gate$)}\;
			$key\_value[idx]$ $\gets$ \texttt{KeyMapping($bin$,$lib$,$inv$)}\;    
        }
        \If{`$\mathtt{x}$' $\subseteq$ $key\_value$}{
            \texttt{K} $\gets$ $\mathit{SAT(\mathcal{C}_{lock},\mathbb{C}_{act},key\_value)}$\;
        } \Else{
            \texttt{K} $\gets$ $key\_value$
        }
        \KwRet \texttt{K}\;
  }
\caption{{Attack on Non-hard-coded Techniques}
\label{alg:KR_NHC}}
\end{algorithm}
\section{Experimental Investigation}
\label{sec:results}

In this section, we demonstrate the efficacy of our key-recovery attacks on the hardware implementation of
PSLL techniques across different parameters such as choice of (i)~technology library, (ii)~synthesis tool, (iii)~synthesis commands, (iv)~type of logic gates used for synthesis, and (v)~key-size.
In addition, we also elucidate some important findings.

\subsection{Experimental Setup}
\label{sec:setup}

\noindent\textbf{Locking techniques.} We implement all the PSLL techniques considered in this work using \textit{Perl} and \textit{Python} on three abstraction levels (\textit{BENCH}, RTL, and synthesized \textit{Verilog}).
We lock the circuits with a key-size of 128 for SARLock, SARLock-DTL, SFLL-HD$^0$, SFLL-flex, SFLL-rem, CAC, CAC-DTL, and ECE.
We choose a key-size of 256 for Anti-SAT, Anti-SAT-DTL, CASLock, SAS, and the variants of Gen-Anti-SAT.
The number of $\mathtt{PP}$ is 16 for SFLL-flex and we diversify the AND-tree by replacing 16 gates in the AND-tree with OR gates for DTL techniques.
We consider 4 SAS blocks for~\cite{liu2020strong}.

\noindent\textbf{Circuits.} We demonstrate the efficacy of our key-recovery attacks on eight combinational circuits from the ITC-99 suite.
We lock each circuit 100 times to capture variations in the selection of $PIPs$ and locking different $POPs$.

\noindent\textbf{Tool setup.} We perform synthesis of locked circuits using two commercial synthesis tools (\textit{Synopsys Design Compiler (DC)} and \textit{Cadence Genus}) and one academic synthesis tool (\textit{ABC}~\cite{brayton2010abc}).
We obtain $\mathtt{TPs}$ using an academic TPG tool, \textit{ATALANTA}~\cite{ATALANTA}.
We use two synthesis recipes,\footnote{Denotes a sequence of logic optimization commands.} \textit{synth\_A} and \textit{synth\_B}, when synthesizing circuits using \textit{Synopsys DC}.
While \textit{synth\_A} comprises \{\texttt{compile\_ultra}, \texttt{compile\_ultra -incremental}\}, \textit{synth\_B} comprises of three instantiations of \texttt{compile\_ultra} followed by three instantiations of \texttt{compile\_ultra -incremental}.
We use the command ``\texttt{synthesize -to\_mapped}'' with different efforts (medium and high) for \textit{Cadence Genus}.
Furthermore, we utilize six synthesis recipes (\textit{resyn}, \textit{resyn2}, \textit{resyn2a}, \textit{resyn3}, \textit{compress}, and \textit{compress2}) within the ABC tool~\cite{brayton2010abc} to verify the efficacy of our proposed attacks.

\noindent\textbf{Attack setup and evaluation metrics.} We implement our key-recovery attacks using \textit{TCL} and C++ scripts integrated with \textit{Synopsys DC} and \textit{ATALANTA}.
We use two metrics to assess the efficacy of our attacks, viz., (i)~accuracy, by computing the number of correct key-bits divided by the key-size and (ii)~precision, by checking the correctness of each key-bit.
We perform verification of the recovered key using a combinational equivalence checker within the \textit{ABC} tool.
We execute our attacks on a 128-core Intel Xeon processor running at 2.4 GHz having, 512 GB of RAM.

\subsection{Results of Key-Recovery Attacks}
\label{sec:KR_results}

\noindent\textbf{Applicability.} Our proposed attacks successfully recover the secret key (with accuracy and precision of 100\%) from
the hardware implementation of all 14 PSLL techniques considered in this work.
\textit{More importantly, our attacks highlight security vulnerabilities in nine previously unbroken locking techniques} (CAC~\cite{CAC}, CAC-DTL, SARLock-DTL, Anti-SAT-DTL~\cite{CAC}, SFLL-flex~\cite{yasin_CCS_2017}, ECE~\cite{shen2018comparative}, Gen-Anti-SAT (Comp. and Non-comp.)~\cite{zhou2021generalized}, and SAS~\cite{liu2020strong}).
We perform a detailed comparison with other key-recovery attacks in \S\ref{sec:comparison_prior_KR_attacks}.

\begin{table*}[ht]
\footnotesize
\setlength{\tabcolsep}{1.3mm}
\caption{Average execution time (in seconds) for \textbf{Key-recovery attack} across hundred random trials.
Locked circuits are synthesized using Synopsys DC with all logic gates from the Nangate 45nm library}
\label{tab:execution_time_KR_45nm}
\begin{tabular}{ccccccccclcccccc}
\hline
\multirow{3}{*}{\backslashbox{\textbf{Circuit}}{\textbf{Defense}}} &
\multicolumn{8}{c}{\textbf{Hard-coded}} &
& 
\multicolumn{6}{c}{\textbf{Non-hard-coded}}
\\ 
\cline{2-9} 
\cline{11-16} & \multirow{2}{*}{\textbf{SARLock}} & \multicolumn{3}{c}{\textbf{SFLL}} & \multirow{2}{*}{\textbf{CAC}} & \multirow{2}{*}{\textbf{ECE}} & \multicolumn{2}{c}{\textbf{DTL}} &
& 
\multirow{2}{*}{\textbf{Anti-SAT}} & \multirow{2}{*}{\textbf{CASLock}} & \multicolumn{2}{c}{\textbf{Gen-Anti-SAT}} & \textbf{DTL} & 
\multirow{2}{*}{\textbf{SAS}} 
\\ 
\cline{3-5} 
\cline{8-9} 
\cline{13-15}
&                                   
& 
\textbf{HD$^0$} & 
\textbf{flex} & 
\textbf{rem} &
&                               
& 
\textbf{SARLock} & 
\textbf{CAC} &
&  
&
& 
\textbf{Comp.} & 
\textbf{Non-comp.} & 
\textbf{Anti-SAT} &                         
\\ 
\hline 
\hline

\textbf{b14\_C} & 
35 & 
17 & 16 & 104 & 
22 & 
37 & 
34 & 111 & 
& 
34 & 
38 & 
44 & 
38 & 
32 & 
25 \\
\hline

\textbf{b15\_C} & 
42 & 
37 & 37 & 157 & 
43 & 
47 & 
43 & 128 & 
& 
37 & 
37 & 
48 & 
39 & 
34 & 
26 \\
\hline

\textbf{b20\_C} & 
83 & 
31 & 31 & 100 & 
52 & 
86 & 
83 & 138 & 
& 
67 & 
73 & 
110 & 
68 & 
73 & 
72 \\
\hline

\textbf{b21\_C} & 
80 & 
32 & 32 & 159 & 
58 & 
85 & 
81 & 
164 & 
& 
65 & 
73 & 
106 & 
70 & 
72 & 
69 \\
\hline

\textbf{b22\_C} & 
121 & 
39 & 37 & 212 & 
82 & 
123 & 
119 & 
161 & 
& 
93 & 
103 & 
168 & 
95 & 
87 & 
106 \\
\hline

\textbf{b17\_C} & 
158 & 
69 & 62 & 166 & 
122 & 
164 & 
158 & 
266 & 
& 
137 & 
131 & 
317 & 
141 & 
107 & 
133 \\
\hline

\textbf{b18\_C} & 
556 & 
361 & 365 & 405 & 
587 & 
570 & 
565 & 
743 &
& 
444 & 
540 & 
624 & 
620 & 
436 &
1,013 \\
\hline

\textbf{b19\_C} & 
637 & 
513 & 545 & 676 & 
557 & 
543 & 
668 & 
931 &
& 
681 & 
1,034 & 
868 &
966 &               
829 &
1,345 \\
\hline                 
\end{tabular}
\end{table*}

\begin{table*}[ht]
\footnotesize
\setlength{\tabcolsep}{1.3mm}
\caption{Average execution time (in seconds) for \textbf{Key-recovery attack} across hundred random trials.
Locked circuits are synthesized using Synopsys DC with all logic gates from GlobalFoundries 65nm library}
\label{tab:execution_time_KR_65nm}
\begin{tabular}{ccccccccclcccccc}
\hline
\multirow{3}{*}{\backslashbox{\textbf{Circuit}}{\textbf{Defense}}} &
\multicolumn{8}{c}{\textbf{Hard-coded}} &
& 
\multicolumn{6}{c}{\textbf{Non-hard-coded}} 
\\ 
\cline{2-9} 
\cline{11-16} & \multirow{2}{*}{\textbf{SARLock}} & \multicolumn{3}{c}{\textbf{SFLL}} & \multirow{2}{*}{\textbf{CAC}} & \multirow{2}{*}{\textbf{ECE}} & \multicolumn{2}{c}{\textbf{DTL}} &
& 
\multirow{2}{*}{\textbf{Anti-SAT}} & \multirow{2}{*}{\textbf{CASLock}} & \multicolumn{2}{c}{\textbf{Gen-Anti-SAT}} & \textbf{DTL} & 
\multirow{2}{*}{\textbf{SAS}} 
\\ 
\cline{3-5} 
\cline{8-9} 
\cline{13-15}
&                                   
& 
\textbf{HD$^0$} & 
\textbf{flex} & 
\textbf{rem} &
&                               
& 
\textbf{SARLock} & 
\textbf{CAC} &
&  
&
& 
\textbf{Comp.} & 
\textbf{Non-comp.} & 
\textbf{Anti-SAT} &
\\ 
\hline 
\hline

\textbf{b14\_C} & 
48 & 
20 & 20 & 98 &
28 &
50 &
46 & 114 &
& 
38 &
42 & 
47 & 
42 & 
35 &
75 \\
\hline

\textbf{b15\_C} & 
64 & 
43 & 53 & 134 & 
50 & 
65 & 
66 & 138 &
& 
37 & 
46 & 
48 & 
36 & 
34 &
98 \\
\hline

\textbf{b20\_C} & 
103 & 
31 & 41 & 113 & 
62 & 
104 & 
103 & 143 &
& 
73 & 
87 & 
96 & 
80 & 
72 &
115 \\
\hline

\textbf{b21\_C} & 
100 & 
33 & 40 & 178 & 
63 & 
106 & 
100 & 143 &
& 
75 & 
94 & 
87 & 
76 & 
67 &
120 \\
\hline

\textbf{b22\_C} & 
142 & 
40 & 49 & 243 & 
86 & 
140 & 
146 & 169 &
& 
96 & 
126 & 
137 & 
110 & 
85 &
153 \\
\hline

\textbf{b17\_C} & 
192 & 
71 & 83 & 276 & 
125 & 
202 & 
193 & 234 &
& 
110 & 
144 & 
169 & 
110 & 
109 &
224 \\
\hline

\textbf{b18\_C} &
643 &
395 &
401 & 527 & 659 &
614 &
670 &
789 &
&
629 &
611 &
599 &
602 &
615 &
971 \\
\hline

\textbf{b19\_C} &
667 &
579 &
511 & 632 & 711 &
579 &
723 &
887 &
&
777 &
1,161 &
907 &
911 &
1,120 &
1,459 \\
\hline
\end{tabular}
\end{table*}

\begin{table*}[htb]
\footnotesize
\setlength{\tabcolsep}{1.3mm}
\caption{Average execution time (in seconds) for \textbf{Key-recovery attack} across hundred random trials.
Locked circuits are synthesized using ABC with only 2-input AND gates and Inverters}
\label{tab:execution_time_KR_abc}
\begin{tabular}{ccccccccclcccccc}
\hline
\multirow{3}{*}{\backslashbox{\textbf{Circuit}}{\textbf{Defense}}} &
\multicolumn{8}{c}{\textbf{Hard-coded}} &
& 
\multicolumn{6}{c}{\textbf{Non-hard-coded}} 
\\ 
\cline{2-9} 
\cline{11-16} & \multirow{2}{*}{\textbf{SARLock}} & \multicolumn{3}{c}{\textbf{SFLL}} & \multirow{2}{*}{\textbf{CAC}} & \multirow{2}{*}{\textbf{ECE}} & \multicolumn{2}{c}{\textbf{DTL}} &
& 
\multirow{2}{*}{\textbf{Anti-SAT}} & \multirow{2}{*}{\textbf{CASLock}} & \multicolumn{2}{c}{\textbf{Gen-Anti-SAT}} & \textbf{DTL} & 
\multirow{2}{*}{\textbf{SAS}} 
\\ 
\cline{3-5} 
\cline{8-9} 
\cline{13-15}
&                                   
& 
\textbf{HD$^0$} & 
\textbf{flex} & 
\textbf{rem} &
&                               
& 
\textbf{SARLock} & 
\textbf{CAC} &
&  
&
& 
\textbf{Comp.} & 
\textbf{Non-comp.} & 
\textbf{Anti-SAT} &
\\ 
\hline 
\hline

\textbf{b14\_C} & 
36 & 
27 & 27 & 114 & 
27 & 
34 & 
35 & 133 & 
& 
42 & 
43 & 
53 & 
46 & 
39 &
72 \\
\hline

\textbf{b15\_C} & 
48 & 
54 & 56 & 147 & 
51 & 
45 & 
51 & 158 & 
& 
48 & 
51 & 
61 & 
51 & 
43 &
87 \\
\hline

\textbf{b20\_C} & 
70 & 
46 & 49 & 104 & 
45 & 
71 & 
72 & 158 & 
& 
81 & 
90 & 
98 & 
86 & 
77 &
103 \\
\hline

\textbf{b21\_C} & 
68 & 
48 & 50 & 133 & 
48 & 
72 & 
72 & 155 & 
& 
78 & 
87 & 
99 & 
84 & 
72 &
106 \\
\hline

\textbf{b22\_C} & 
92 & 
59 & 60 & 134 & 
59 & 
89 & 
90 & 
176 & 
& 
108 & 
116 & 
140 & 
114 & 
98 &
121 \\
\hline

\textbf{b17\_C} & 
124 & 
98 & 100 & 154 & 
102 & 
118 & 
129 & 
219 & 
& 
175 & 
205 & 
216 & 
179 & 
146 &
204 \\
\hline

\textbf{b18\_C} & 
160 & 
229 & 269 & 514 & 
230 & 
152 & 
165 & 
430 &                                
& 
255 & 
769 & 
312 & 
415 & 
306 &
929 \\
\hline

\textbf{b19\_C} & 
345 & 
551 & 449 & 917 & 
545 & 
347 & 
368 & 912 &
& 
444 & 
972 & 
784 & 
595 & 
958 &
1,101 \\
\hline
\end{tabular}
\end{table*}

\begin{table*}[ht]
\centering
\scriptsize
\setlength{\tabcolsep}{1.2mm}
\caption{Number of SAT-based attack Iterations (or oracle queries) required to recover undeciphered key-bits. Key-size is 128}
\label{tab:SAT_iterations_hard_coded}
\begin{tabular}{ccccccccccccccccccccccccc}
\hline
\multirow{3}{*}{\backslashbox{\textbf{Circuit}}{\textbf{Defense}}}
&
\multicolumn{4}{c}{\textbf{SARLock}} 
& 
\multicolumn{4}{c}{\textbf{SARLock-DTL}} 
& 
\multicolumn{4}{c}{\textbf{CAC}} 
&
\multicolumn{4}{c}{\textbf{CAC-DTL}} 
& 
\multicolumn{4}{c}{\textbf{ECE}} 
& 
\multicolumn{4}{c}{\textbf{SFLL-flex}} \\ \cline{2-25} 
& 
\multicolumn{2}{c}{\textbf{Synth\_A}} 
& 
\multicolumn{2}{c}{\textbf{Synth\_B}} 
& 
\multicolumn{2}{c}{\textbf{Synth\_A}} 
& 
\multicolumn{2}{c}{\textbf{Synth\_B}} 
& 
\multicolumn{2}{c}{\textbf{Synth\_A}} 
& 
\multicolumn{2}{c}{\textbf{Synth\_B}} 
& 
\multicolumn{2}{c}{\textbf{Synth\_A}} 
& 
\multicolumn{2}{c}{\textbf{Synth\_B}} 
& 
\multicolumn{2}{c}{\textbf{Synth\_A}} 
& 
\multicolumn{2}{c}{\textbf{Synth\_B}} 
& 
\multicolumn{2}{c}{\textbf{Synth\_A}} 
& 
\multicolumn{2}{c}{\textbf{Synth\_B}} \\ \cline{2-25} 
& 
\textbf{min} & \textbf{max} 
& 
\textbf{min} & \textbf{max}       
& 
\textbf{min} & \textbf{max}       
& 
\textbf{min} & \textbf{max}
& 
\textbf{min} & \textbf{max}       
& 
\textbf{min} & \textbf{max}       
& 
\textbf{min} & \textbf{max}       
& 
\textbf{min} & \textbf{max}       
& 
\textbf{min} & \textbf{max}       
& 
\textbf{min} & \textbf{max}       
& 
\textbf{min} & \textbf{max}       
& 
\textbf{min} & \textbf{max}  
\\ \hline

\textbf{b14\_C}   
& 
2 & 16
&
2 & 128
& 
2 & 16
&
2 & 32 
& 
2 & 32
&
2 & 8 
& 
2 & 8
&
2 & 64 
& 
2 & 16
&
2 & 32 
& 
2 & 64
&
4 & 128
\\ \hline

\textbf{b15\_C}   
& 
2 & 16
&
8 & 256
& 
2 & 16
&
2 & 128 
& 
2 & 2
&
2 & 8 
& 
2 & 8
&
2 & 128 
& 
2 & 16
&
8 & 64
& 
2 & 4
&
2 & 16
\\ \hline

\textbf{b20\_C}   
& 
2 & 16
&
2 & 64
& 
2 & 16
&
2 & 64 
& 
2 & 8
&
2 & 8  
& 
2 & 32
&
2 & 32 
& 
2 & 16
&
4 & 64 
& 
2 & 4
&
2 & 8
\\ \hline

\textbf{b21\_C}   
& 
2 & 16
&
2 & 64
& 
2 & 16
&
2 & 64 
& 
2 & 2
&
2 & 8 
& 
2 & 16
&
2 & 16 
& 
2 & 32
&
2 & 64
& 
2 & 2
&
4 & 32
\\ \hline

\textbf{b22\_C}   
& 
2 & 16
&
8 & 64
& 
2 & 16
&
2 & 64 
& 
2 & 2
&
2 & 8 
& 
2 & 32
&
2 & 64 
& 
2 & 32
&
4 & 64 
& 
2 & 16
&
2 & 64
\\ \hline

\textbf{b17\_C}   
& 
2 & 8
&
2 & 256
& 
2 & 16
&
2 & 128 
& 
2 & 2
&
2 & 16 
& 
2 & 16
&
2 & 64 
& 
2 & 16
&
2 & 32 
& 
2 & 4
&
2 & 16
\\ \hline

\textbf{b18\_C}   
& 
2 & 16
&
2 & 64
& 
2 & 16
&
2 & 64 
& 
2 & 2
&
2 & 8 
& 
2 & 16
&
2 & 64 
& 
2 & 16
&
4 & 128  
& 
2 & 4
&
2 & 4
\\ \hline
\textbf{b19\_C}   
& 
2 & 16
&
2 & 256
& 
2 & 16
&
2 & 256 
& 
2 & 2
&
2 & 8 
& 
2 & 16
&
2 & 32 
& 
2 & 16
&
4 & 512  
& 
2 & 4
&
2 & 8
\\ \hline
\end{tabular}
\end{table*}

\noindent\textbf{Execution time.} We document the execution time of our key-recovery attacks across 14 locking techniques for ITC-99 circuits in Table~\ref{tab:execution_time_KR_45nm}, Table~\ref{tab:execution_time_KR_65nm}, and Table~\ref{tab:execution_time_KR_abc}, respectively. 
We derive the execution time for each locking technique and locked circuit by averaging attack runtimes across 100 random trials.
We follow three setups, as explained next. 
First, we synthesize the locked circuits using \textit{Synopsys DC} with all Boolean logic gates available in a technology library (full-library) for \textit{Nangate 45nm} and \textit{GlobalFoundries 65nm}.
We document the average attack execution time for the aforementioned setup in Table~\ref{tab:execution_time_KR_45nm} and Table~\ref{tab:execution_time_KR_65nm}.
Next, we synthesize the locked circuits in a technology-agnostic manner by 
using an academic synthesis tool, \textit{ABC}~\cite{brayton2010abc}. 
We document the attack execution time for this setup in Table~\ref{tab:execution_time_KR_abc}.
Across all the considered PSLL techniques, benchmarks, and the aforementioned setups, our attack recovers the secret key (accuracy and precision of 100\%) in a maximum of 1,013 and 1,459 seconds for the two largest circuits (b18\_C with 117,941 gates and b19\_C with 237,962 gates) from the ITC-99 suite.

\noindent\textbf{Oracle-less versus Oracle-guided attacks.} Here, we provide further details regarding the efficacy of our key-recovery attacks by distinguishing whether the attack recovers the key in an \textit{oracle-less} or an \textit{oracle-guided} setting.
Recall that in an oracle-less setting, an attacker has access to only the locked circuit, while in an oracle-guided setting, an attacker has access to a working chip and the locked circuit (\S\ref{sec:threat_model}).
We depict the minimum (and maximum) number of SAT-based attack iterations required to recover the secret key for two different synthesis settings (\textit{synth\_A} and \textit{synth\_B}) for some PSLL techniques in Table~\ref{tab:SAT_iterations_hard_coded}.
Recall that we leverage an oracle (by launching the SAT-based attack~\cite{subramanyan15}) to recover the undeciphered key-bits (Alg.~\ref{alg:KR_HC} and~\ref{alg:KR_NHC}). 
On average, our attacks correctly recover a high percentage (94.5\%) of key-bits using only structural analysis of the locked circuit, \textit{i.e.,} in an oracle-less setting.
For instance, the maximum number of undeciphered key-bits across all locking techniques and circuits is nine for a key-size of 128 (Table~\ref{tab:SAT_iterations_hard_coded}).

Note that our key-recovery attacks do not target vulnerabilities in 
the underlying PSLL algorithm, rather we identify and leverage structural vulnerabilities in the hardware implementation of the considered PSLL techniques to recover the secret key.
Our attacks perform structural analysis of the locked circuit and recover the secret key by either (i)~leveraging $\mathtt{TPs}$ in hard-coded PSLL techniques or (ii)~utilizing attributes about $KIs$ in non-hard-coded PSLL techniques.
Since any Boolean function can be realized using different structural representations, it is imperative that we evaluate the efficacy of our attacks on locked circuits with varied structural representations.
Therefore, we perform a thorough analysis with regards to the choice of (i)~technology libraries (academic/commercial), (ii)~synthesis tools (academic/commercial), (iii)~synthesis commands, (iv)~logic gates used for synthesis, and (v)~protected pattern, for different circuits and PSLL techniques.
All the aforementioned parameters dictate the underlying structure (or graph-based representation) of locked circuits.
Although we do not showcase the attack execution time for all cases (due to limited space), we illustrate the distribution of oracle-less (oracle-guided attacks) for four PSLL techniques.

\begin{figure*}[htb]
\centering
\subfloat[]{\includegraphics[width=.24\textwidth]{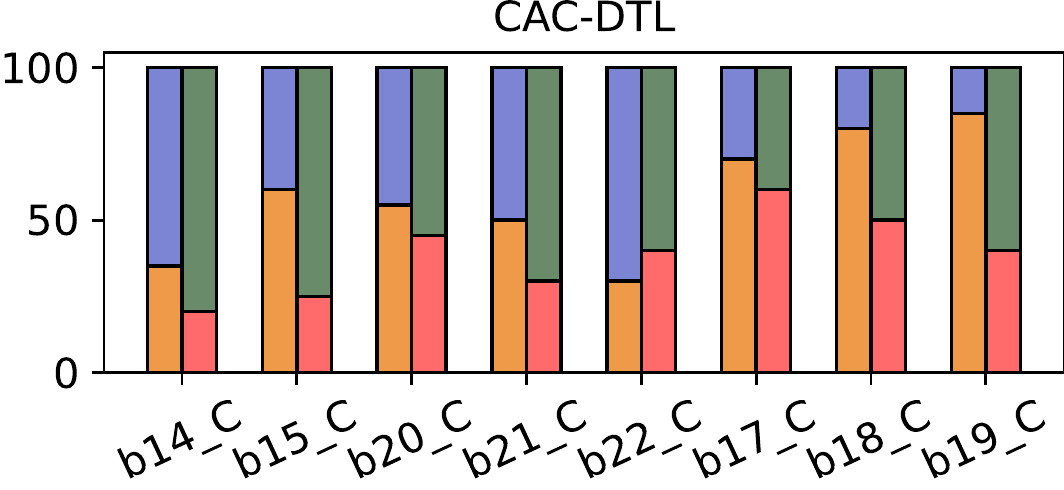}}
\hfill
\subfloat[]{\includegraphics[width=.24\textwidth]{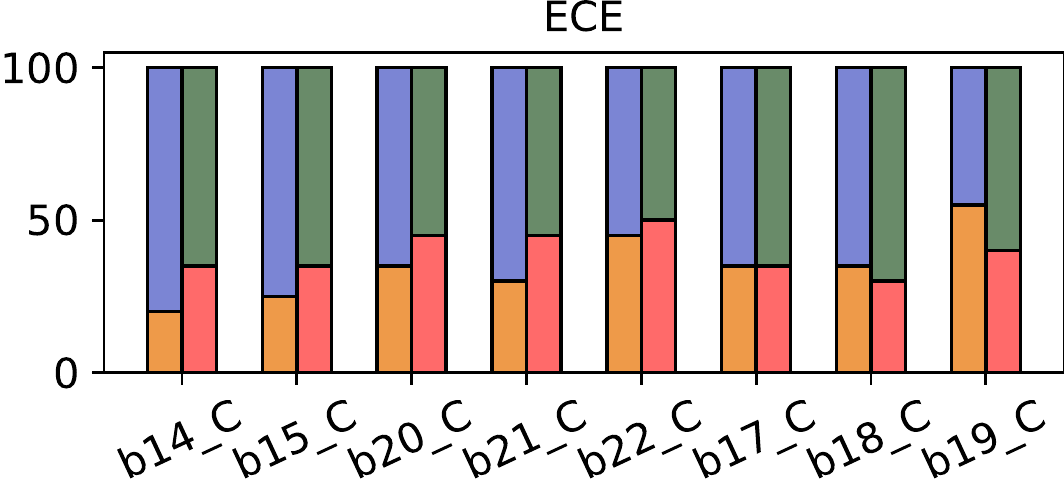}}
\hfill
\subfloat[]{\includegraphics[width=.24\textwidth]{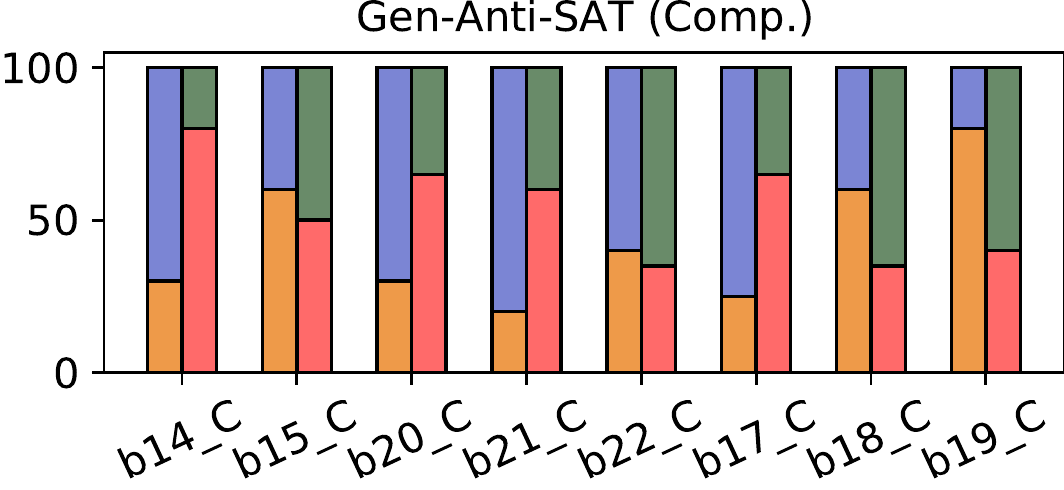}}
\hfill
\subfloat[]{\includegraphics[width=.24\textwidth]{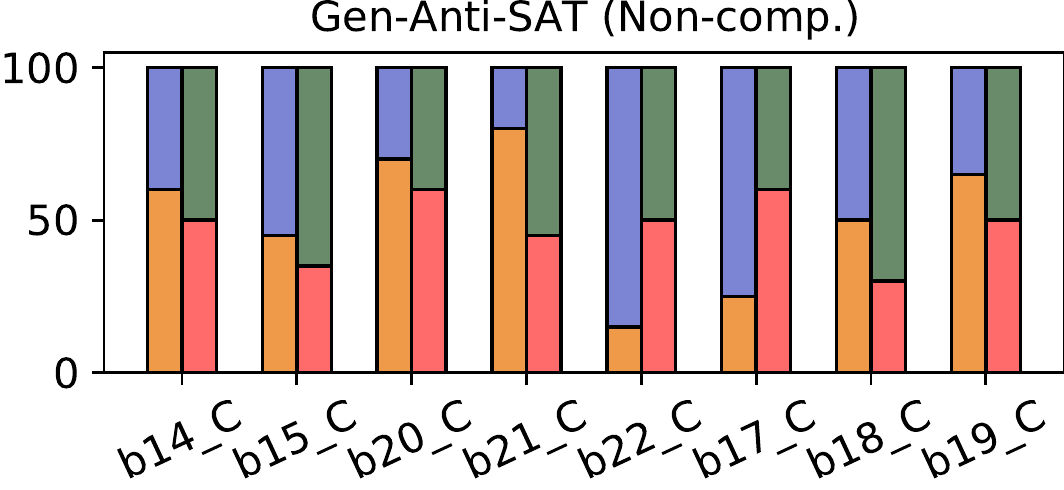}}
\caption{Distribution of oracle-less (oracle-guided) attacks for different technology libraries.
Orange (violet) denote oracle-less (oracle-guided) results for \textit{Nangate 45nm} technology library.
Red (green) denote oracle-less (oracle-guided) results for \textit{GlobalFoundries 65nm} technology library.}
\label{fig:myfig3}
\end{figure*}

\begin{figure*}[htb]
\centering
\subfloat[]{\includegraphics[width=.24\textwidth]{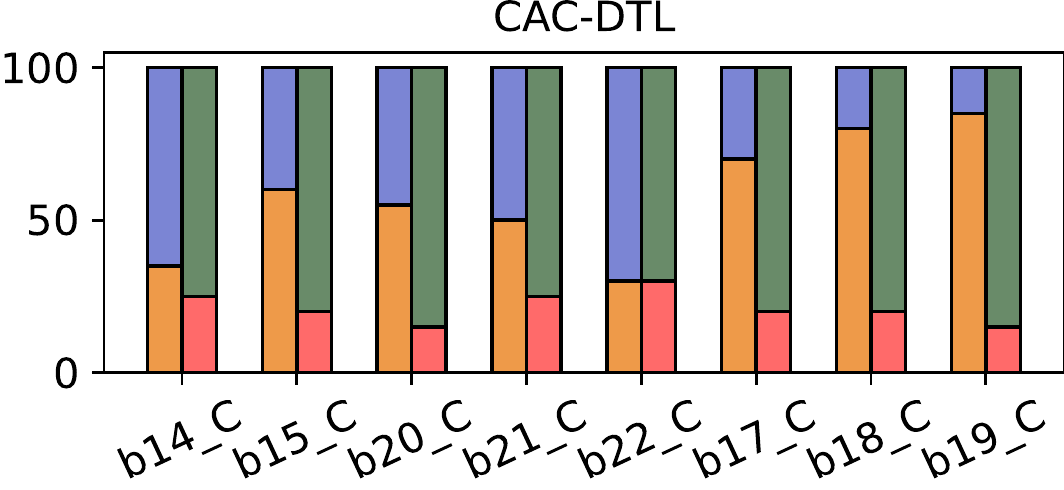}}
\hfill
\subfloat[]{\includegraphics[width=.24\textwidth]{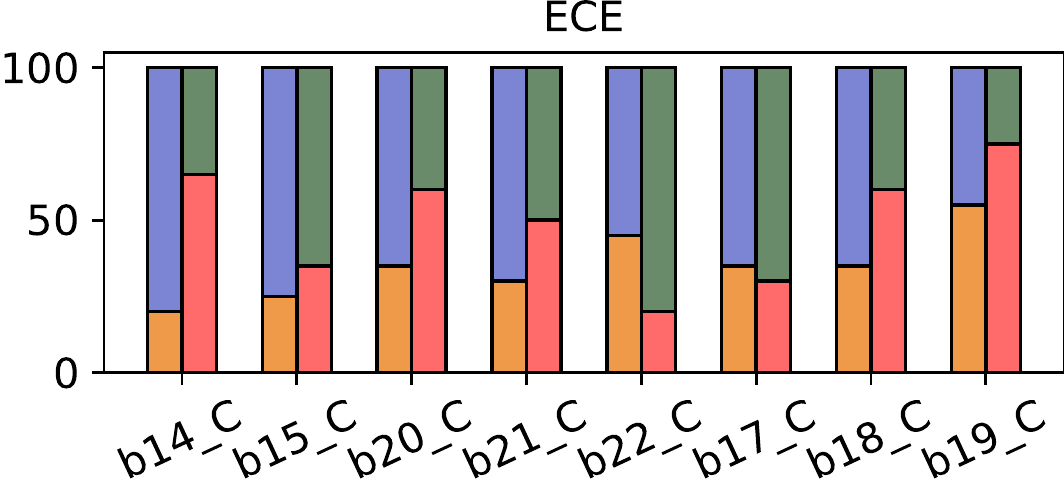}}
\hfill
\subfloat[]{\includegraphics[width=.24\textwidth]{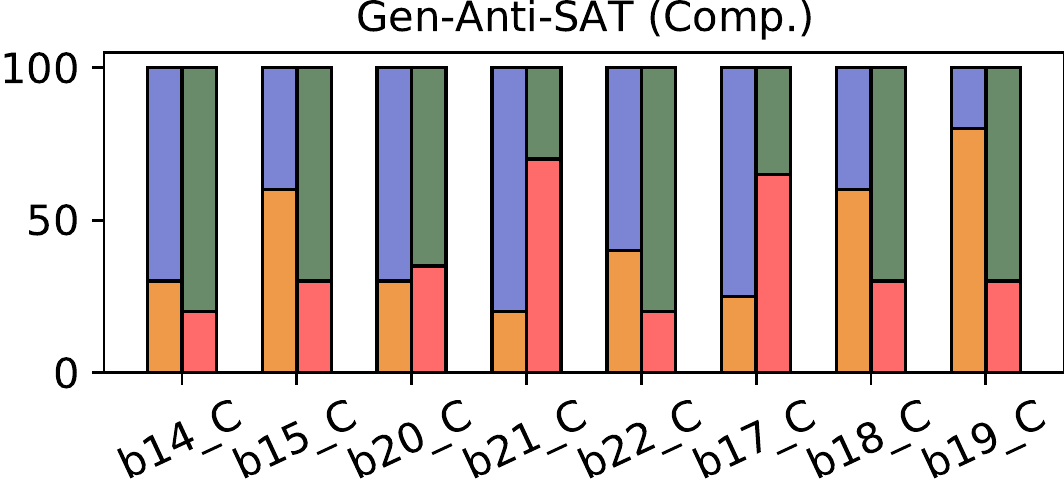}}
\hfill
\subfloat[]{\includegraphics[width=.24\textwidth]{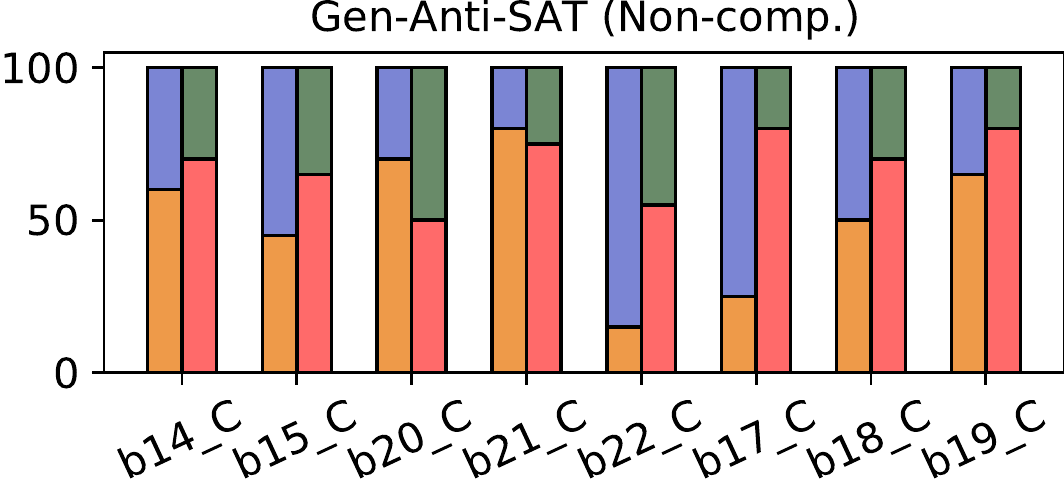}}
\caption{Distribution of oracle-less (oracle-guided) attacks for different synthesis tools. 
Orange (violet) denote oracle-less (oracle-guided) results for circuits synthesized using \textit{Synopsys DC}.
Red (green) denote oracle-less (oracle-guided) results for circuits synthesized using \textit{Cadence Genus}.}
\label{fig:myfig2}
\end{figure*}

\noindent\textbf{Effect of technology library.} To evaluate the effect of using different technology libraries (academic versus commercial), we synthesize the locked circuits using \textit{Synopsys DC} using only two-input gates with the same synthesis commands. 
The variable parameter is the technology library and the timing constraints for synthesis. 
We illustrate the distribution of oracle-less to oracle-guided attacks (stacked bar graphs) for four PSLL techniques in Fig.~\ref{fig:myfig3}. 
When locked circuits are synthesized using \textit{Nangate 45nm} library, our key-recovery attack recovers the secret key in an oracle-less setting in 50\% of cases for CAC-DTL, 35\% for ECE, 43.12\% for Gen-Anti-SAT (Comp.), and 51.25\% for Gen-Anti-SAT (Non-comp.).
These numbers are 38.75\% for CAC-DTL, 39.38\% for ECE, 53.75\% for Gen-Anti-SAT (Comp.), and 47.5\% for Gen-Anti-SAT (Non-comp.) when circuits are synthesized using \textit{GlobalFoundries 65nm} library.
The remaining circuits are broken in an oracle-guided setting.
\textbf{This analysis highlights the efficacy of our attacks across technology libraries.}

\begin{figure*}[t]
\centering
\subfloat[]{\includegraphics[width=.24\textwidth]{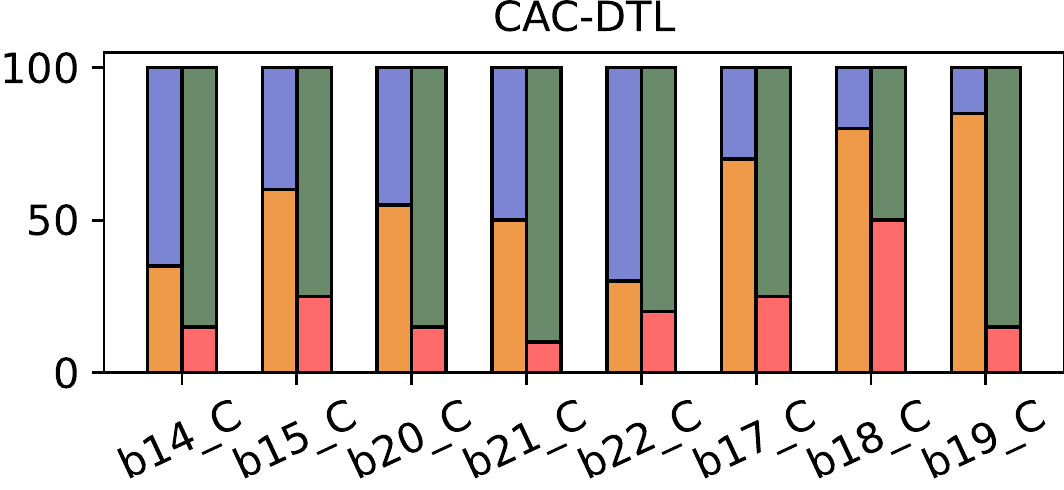}}
\hfill
\subfloat[]{\includegraphics[width=.24\textwidth]{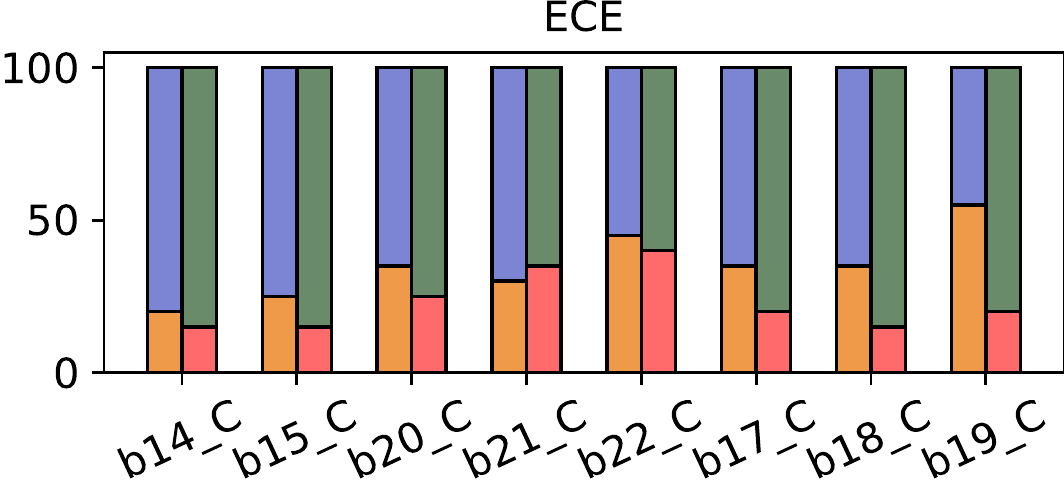}}
\hfill
\subfloat[]{\includegraphics[width=.24\textwidth]{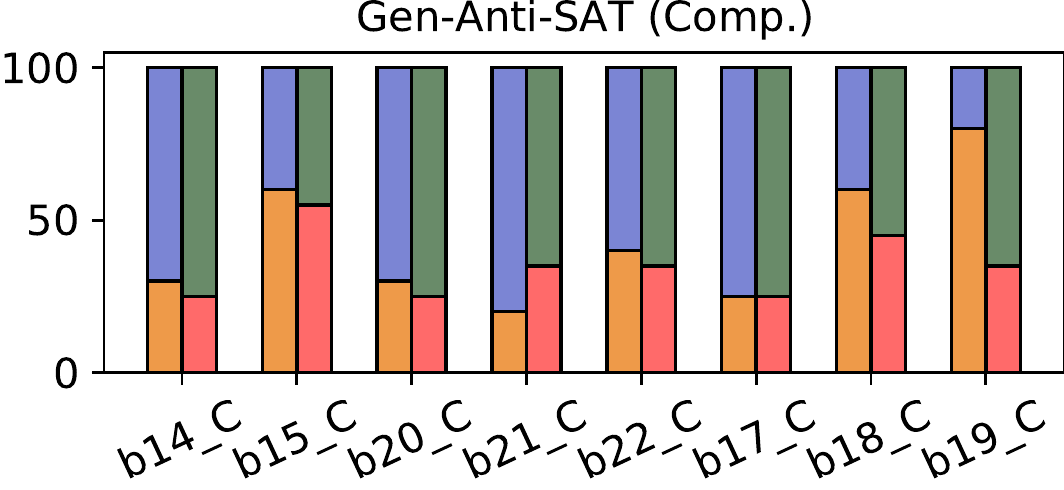}}
\hfill
\subfloat[]{\includegraphics[width=.24\textwidth]{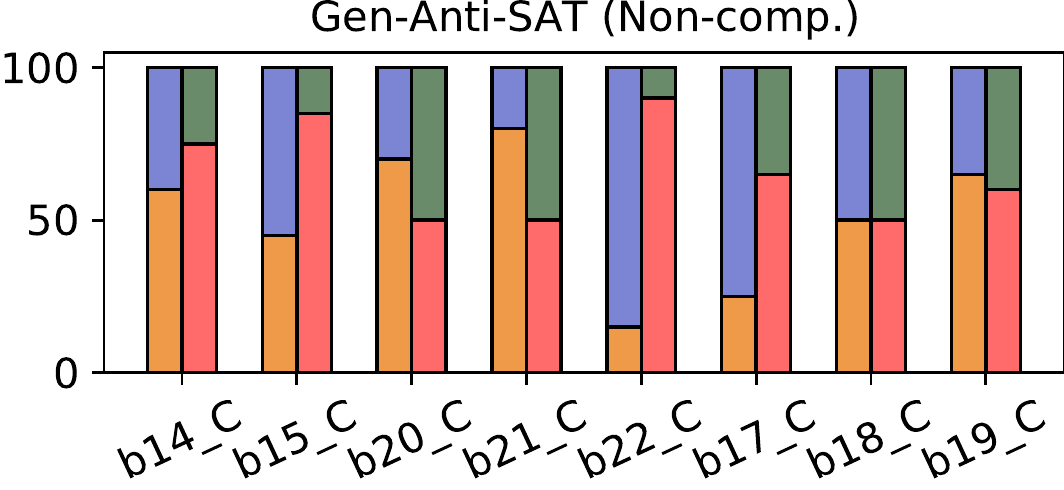}}
\caption{Distribution of oracle-less (oracle-guided) attacks for different type of gates used in synthesis. 
Orange (violet) denote oracle-less (oracle-guided) results for circuits synthesized using two-input gates.
Red (green) denote oracle-less (oracle-guided) results for circuits synthesized using full library.}
\label{fig:myfig1}
\end{figure*}

\begin{figure*}[htb]
\centering
\subfloat[]{\includegraphics[width=.24\textwidth]{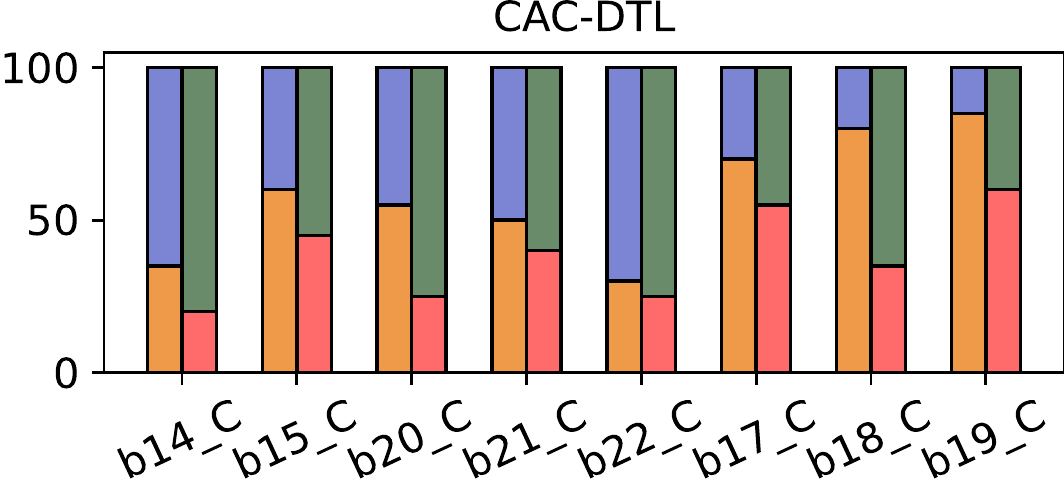}}
\hfill
\subfloat[]{\includegraphics[width=.24\textwidth]{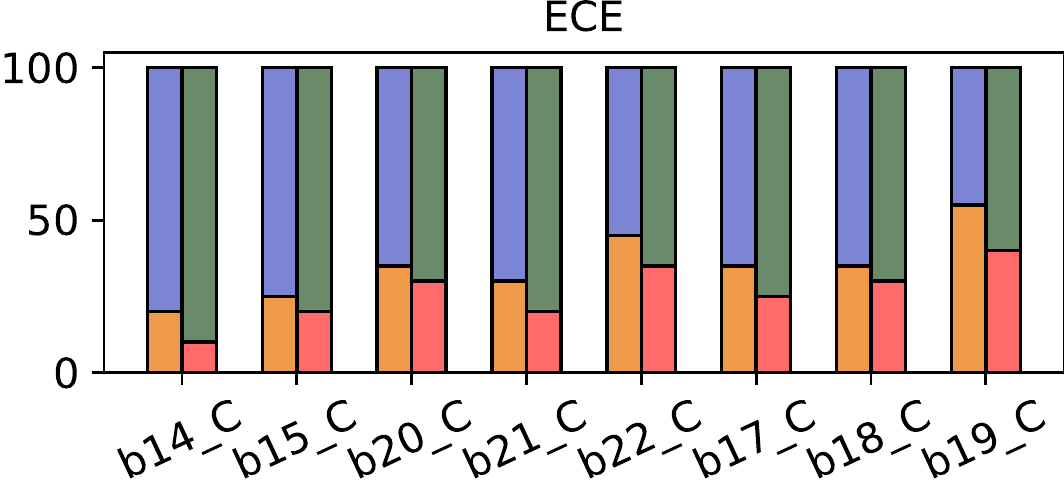}}
\hfill
\subfloat[]{\includegraphics[width=.24\textwidth]{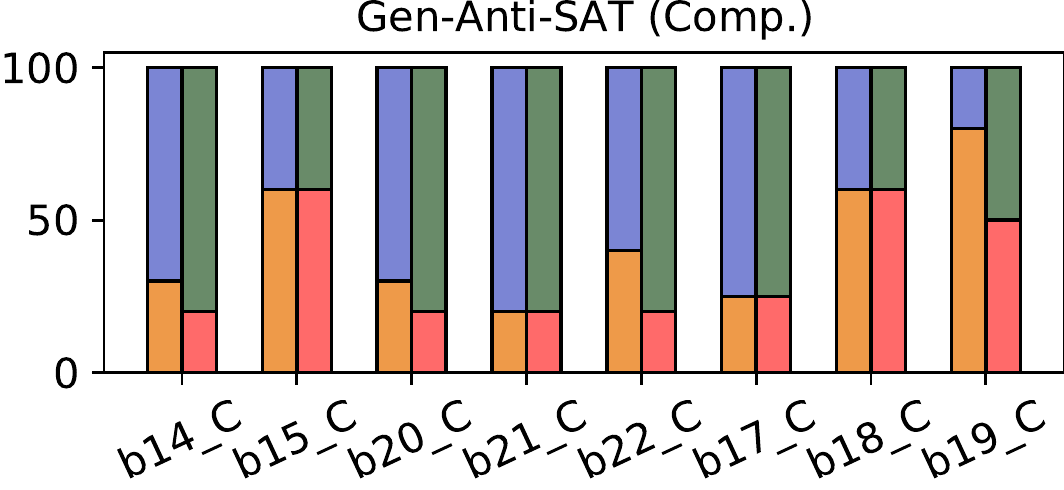}}
\hfill
\subfloat[]{\includegraphics[width=.24\textwidth]{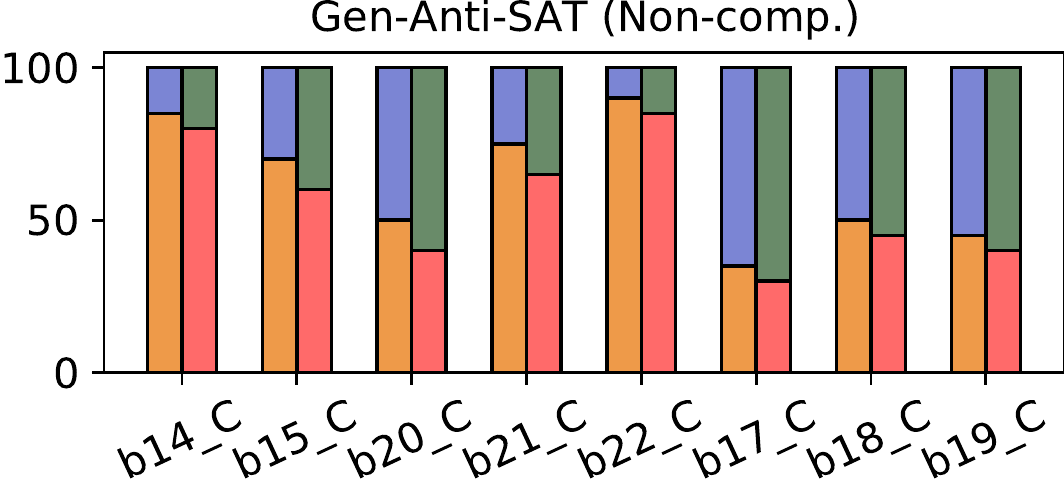}}
\caption{Distribution of oracle-less (oracle-guided) attacks for different synthesis settings. 
Orange (violet) denote oracle-less (oracle-guided) results for circuits synthesized using \textit{synth\_A}.
Red (green) denote oracle-less (oracle-guided) results for circuits synthesized using \textit{synth\_B}.}
\label{fig:OL_OG_PSLL_2inp_gates_synth_setting}
\end{figure*}

\begin{figure}[tb]
\centering
\subfloat[]{\includegraphics[width=.48\textwidth]{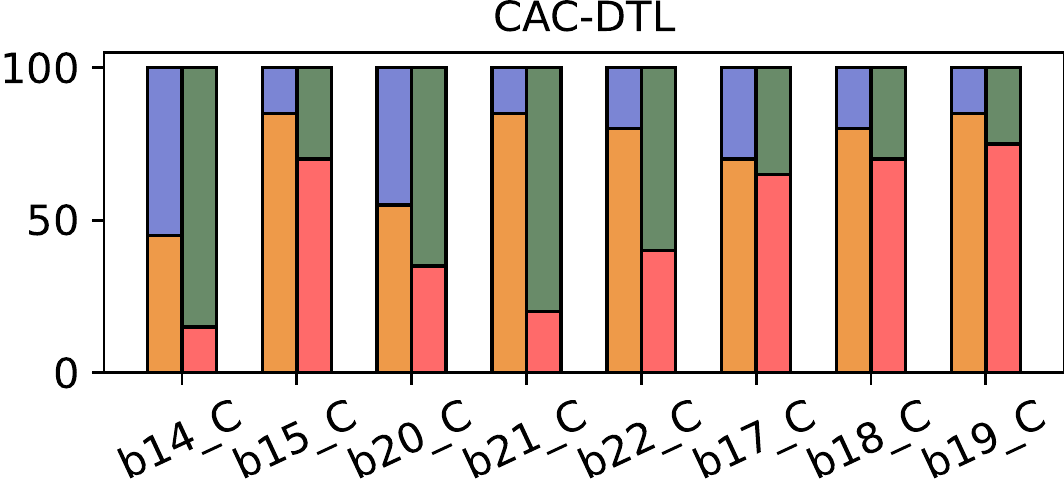}}\hfill
\subfloat[]{\includegraphics[width=.48\textwidth]{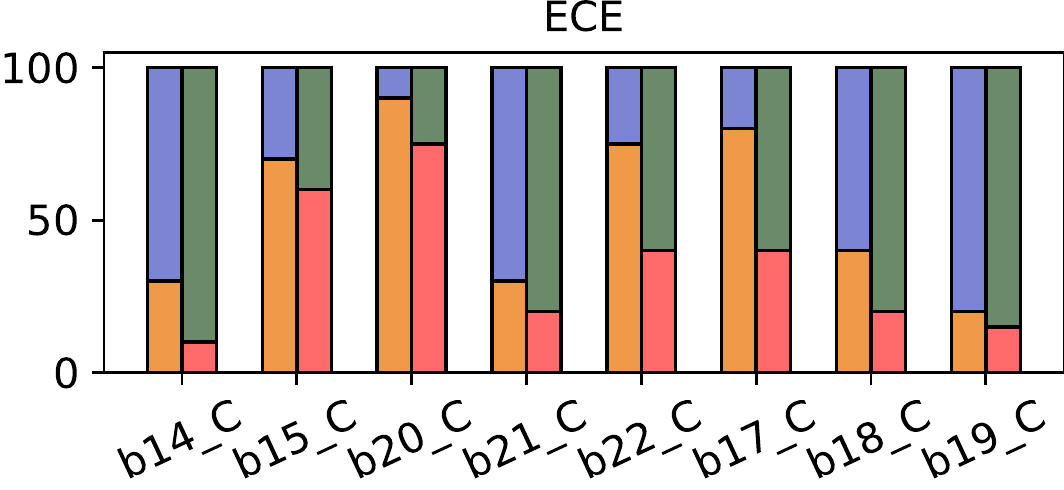}}\\
\caption{Distribution of oracle-less (oracle-guided) attacks for different $\mathtt{PP}$ for two hard-coded PSLL techniques.
Orange (violet) denote oracle-less (oracle-guided) results for circuits synthesized using \textit{synth\_A}.
Red (green) denote oracle-less (oracle-guided) results for circuits synthesized using \textit{synth\_B}.}
\label{fig:OL_OG_PSLL_PIPs_synth_setting}
\end{figure}

\begin{figure*}[ht]
\centering
\subfloat[]{\includegraphics[width=.24\textwidth]{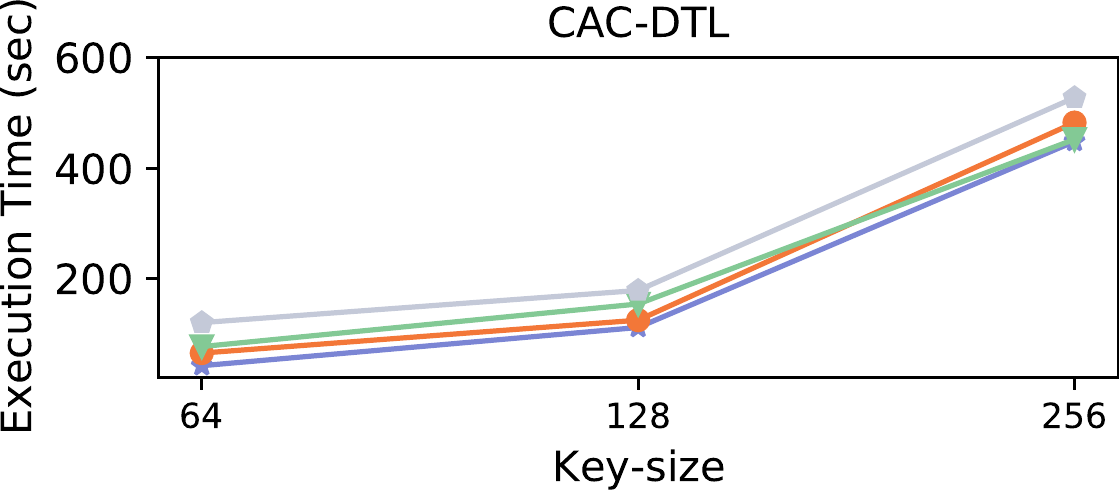}}
\hfill
\subfloat[]{\includegraphics[width=.24\textwidth]{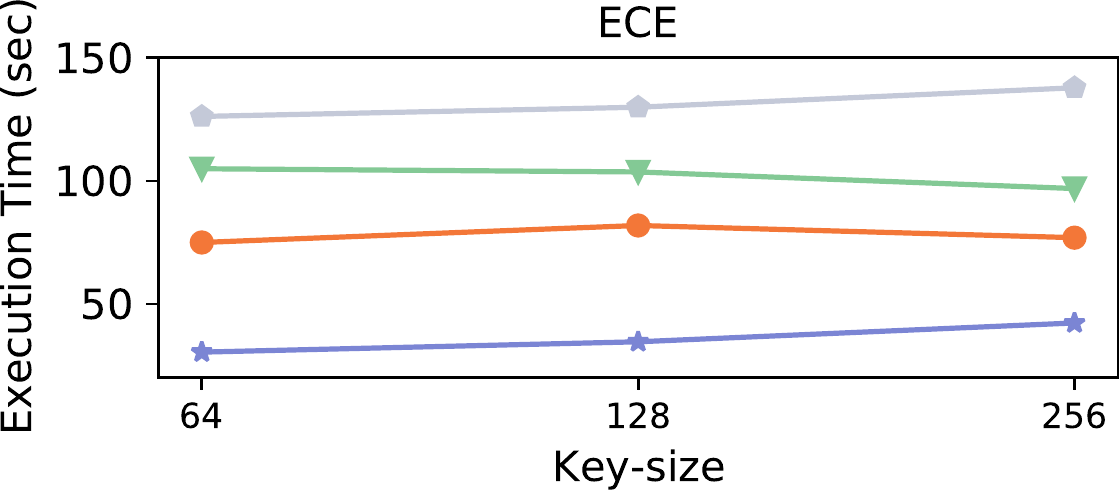}}
\hfill
\subfloat[]{\includegraphics[width=.24\textwidth]{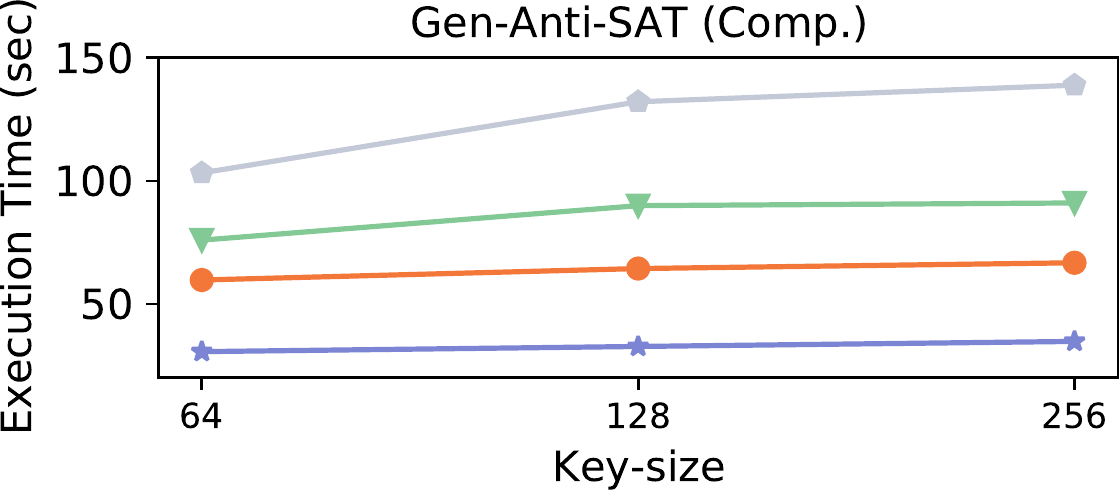}}
\hfill
\subfloat[]{\includegraphics[width=.24\textwidth]{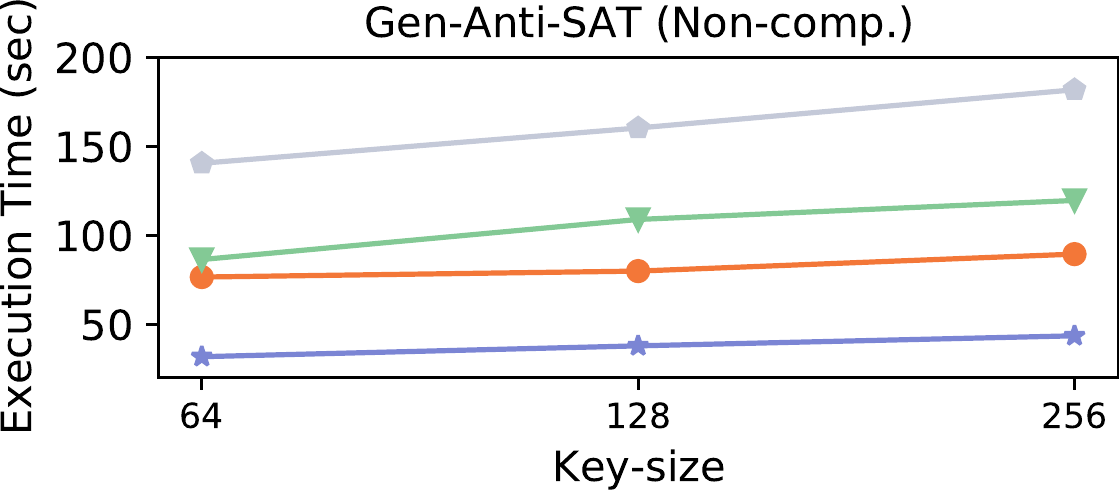}}
\caption{Effect of increasing key-size ($|K|$=64, $|K|$=128, and $|K|$=256) on the attack execution time for ITC-99 circuits (b14\_C, b15\_C, b17\_C, and b22\_C).
The gray, green, orange, and blue colors correspond to b17\_C, b22\_C, b15\_C, and b14\_C circuits (top to bottom), respectively.
}
\label{fig:key_size_execution_time}
\end{figure*}

\noindent\textbf{Effect of synthesis tool.} To evaluate the effect of using different synthesis tools, we fix the technology node (45nm) and the type of gates used for synthesis (2-input gates). 
We showcase the distribution of oracle-less to oracle-guided attacks in Fig.~\ref{fig:myfig2}. 
A majority of trials (78.75\% and 62.5\%) for CAC-DTL and Gen-Anti-SAT (Comp.) require oracle access when locked circuits are synthesized using \textit{Cadence Genus}. 
The percentage of trials requiring oracle access dropped to 43.13\% and 56.88\% when locked circuits are synthesized using \textit{Synopsys DC}.
This analysis demonstrates the efficacy of our attacks across different commercial synthesis tools.

\noindent\textbf{Effect of types of gates.} To evaluate the effect of utilizing different types of gates (2-input gates versus full-library) for synthesis, we fix the synthesis tool (\textit{Synopsys DC}), technology library (\textit{Nangate 45nm}), and synthesis commands (\texttt{compile\_ultra} +  \texttt{compile\_ultra -incremental}). 
We showcase the distribution of oracle-less and oracle-guided attacks in Fig.~\ref{fig:myfig1}. 
Most trials (78.13\%, 76.88\%, and 65\%) for CAC-DTL, ECE, and Gen-Anti-SAT (Comp.) require oracle access when synthesis is accomplished using the full library.
The percentage of trials requiring oracle access dropped to 41.88\%, 65\%, and 56.88\% when synthesis is performed using two-input gates.
The availability of diverse logic gates during full library synthesis enables the synthesis tool to optimize the structure of locked circuits. 
Such optimizations hinder the recovery of a \textit{few} key-bits when the attacker only has access to a locked circuit. 
However, an attacker can decipher the value of these unknown key-bits using an oracle.

\noindent\textbf{Effect of synthesis commands.} To evaluate the effect of synthesis commands, we fix the synthesis tool (\textit{Synopsys DC}), type of logic gates used for synthesis (2-input gates), and the technology library (\textit{Nangate 45nm}). 
We observe aggressive synthesis optimizations (\textit{synth\_B}) lead to more undeciphered key-bits (when an attacker attempts to recover the secret key in an oracle-less setting), thereby requiring access to an oracle to recover the secret key (Fig.~\ref{fig:OL_OG_PSLL_2inp_gates_synth_setting}).

\noindent\textbf{Effect of $\mathtt{PP}$.} To evaluate the impact of $\mathtt{PP}$, we fix the synthesis tool (\textit{Synopsys DC}), the type of logic gates used for synthesis (2-input gates), and the technology library (\textit{Nangate 45nm}).
The $PIPs$ and $POP$ are also kept constant across all locked circuits; the only variable parameter is the choice of $\mathtt{PP}$.
We illustrate the distribution of oracle-less to oracle-guided attacks for two hard-coded PSLL techniques in Fig.~\ref{fig:OL_OG_PSLL_PIPs_synth_setting}.
While we recover the secret key (accuracy of 100\%) in all locked circuits, we observe that the role of $\mathtt{PP}$ determines whether our attacks recover the secret key in an oracle-less setting (or not).
Furthermore, synthesis-induced logic optimizations, \textit{i.e.,} the use of different synthesis recipes also play a crucial role, as evidenced next.
While we recover the secret key in an oracle-less setting for 73.12\% (CAC-DTL) and 54.38\% (ECE) of the trials when locked circuits are synthesized using \textit{synth\_A}, these numbers drop to 48.75\% (CAC-DTL) and 35\% (ECE) when using \textit{synth\_B}.

\noindent\textbf{Effect of key-size.} We illustrate the impact of increasing key-size on the attack execution time in Fig.~\ref{fig:key_size_execution_time}. 
Increasing key-size does not affect the accuracy of our key-recovery attacks except for an increase in the attack execution time.

Note that, we do not identify anything wrong in the security proofs of the PSLL techniques considered in our work. 
The security proofs were directed primarily toward resilience against I/O-based attacks and assumed that an attacker would only query a working chip to recover the secret key. 
The scenario of an attacker aiming to recover the secret key \textit{only} through input/output pairs from a working chip corresponds to \textit{black-box cryptanalysis.}
Just as cryptographic algorithms provide security against an attacker with only black-box access to cryptographic devices, PSLL techniques provide provable-security against an attacker with only black-box access.
\textit{However, such a (black-box) model does not always correspond to the realities of hardware implementations.}
In a realistic and practical setting, an attacker has access to the locked circuit and the activated chip (\S\ref{sec:threat_model}) and attempt to recover the secret key through structural and functional analysis.

\llbox{
\textbf{Takeaway Message:} Our attacks successfully recover the secret key (with 100\% accuracy and 100\% precision) from the hardware implementation of all the considered PSLL techniques for all circuits across different parameters that include variations in (i)~technology libraries, (ii)~synthesis tools, (iii)~synthesis commands, (iv)~type of logic gates used, (v)~protected patterns, and (vi)~key-sizes.
}

\subsection{Important Observations from Key-Recovery Attacks}
\label{sec:findings_KR_attack}

\begin{enumerate}[leftmargin=*]

\item Our attack recovers the secret key (with 100\% accuracy and 100\% precision) in the following hard-coded PSLL techniques (SARLock, SARLock-DTL, ECE, CAC, CAC-DTL, SFLL-HD$^0$, SFLL-flex) in an \textit{oracle-less} setting when the defender synthesizes locked circuits using an academic synthesis tool \textit{ABC}~\cite{brayton2010abc}.
An exception is SFLL-rem, where the attack requires access to an oracle to recover some key-bits.
\textit{Recovering the secret key in an \textit{oracle-less} setting is powerful since it undermines the security guarantees of logic locking techniques during fabrication.}

\item Our KGM attack recovers the secret key (100\% accuracy and 100\% precision) for non-hard-coded PSLL techniques such as Anti-SAT, Anti-SAT-DTL, and Gen-Anti-SAT (Comp.) in an \textit{oracle-less} setting when circuits are synthesized using \textit{ABC}. 
In addition, the attack recovers the secret key for most cases for CASLock (92.5\%) and Gen-Anti-SAT (Non-comp.) (90\%) in an \textit{oracle-less} setting.
All locked circuits are broken using an oracle for SAS.

\item Our attacks recover the secret key for most PSLL techniques in an \textit{oracle-less setting} across six different synthesis recipes using \textit{ABC}. 
\textit{In a nutshell, synthesis operations carried out by \textit{ABC} do not aid in the dissolution of key-revealing logic gates(s) with the original circuit for both hard-coded and non-hard-coded PSLL techniques.}

\item Through our experiments, we observe that synthesis optimizations, usage of different technology libraries, etc., lead to a reduction of point-functions (Fig.~\ref{fig:KR_HC_eg}).
\textit{Our attack recovers the secret key successfully independent of procedures used by designers to generate the modified circuit (\S\ref{sec:PSLL}).}

\item To increase the output corruption of PSLL techniques and resist approximate attacks (\textit{Double-DIP}~\cite{shen2017double} and \textit{AppSAT}~\cite{shamsi2017appsat}), researchers proposed DTL~\cite{CAC} and ECE~\cite{shen2018comparative}. 
Our attacks recover the secret key for both techniques since the underlying construction either hard-codes the $\mathtt{PP}$ using (i)~a point-function (ECE), or (ii)~a point-function diversified with OR/NAND gates. 
The structural hints from key-revealing logic gate(s) (either point-function or reduced point-functions) are captured through $\mathtt{TPs}$, aiding an attacker to recover the secret key.

\item It has been established that hard-coded passwords in software artifacts are indicators of weakness in software security. 
For example, the CWE-798 mentions that ``\textit{if hard-coded passwords are used, it is almost certain that malicious users will gain access to the account in question.}'' 
Our key-recovery attacks attest to this from a hardware perspective, \textit{i.e.,} the hard-coding of secrets in PSLL techniques lead to structural vulnerabilities that attackers can exploit to recover the secret key.

\item Our experimental analysis reveals that the locking unit remains disjoint from the original circuit for all considered non-hard-coded PSLL techniques.

\item Finally, we observe that the choice of $\mathtt{PP}$ plays a role in the ability of a hard-coded point-function to merge with the original circuit.
During our experiments, we came across a few examples where the considered $\mathtt{PP}$ led to a significant dissolution of the point-function.
For instance, we observed that hard-coding a specific $\mathtt{PP}$ using a 64-input point-function led us to recover only 35 bits using our attack. 
This (empirical) finding highlights the role of $\mathtt{PP}$ toward structural security for hard-coded PSLL techniques.

\end{enumerate}

\subsection{Comparison with State-of-the-Art Key-Recovery Attacks}
\label{sec:comparison_prior_KR_attacks}

\noindent\textbf{Generality.} Recall that almost all the key-recovery attacks proposed by researchers cater towards a specific locking technique and do not generalize to other techniques (\S\ref{sec:research_challenges}).
On the other hand, our key-recovery attacks apply to a broad category of PSLL techniques (both hard-coded and non-hard-coded), including \textit{nine previously unbroken techniques}.

\noindent\textbf{Scalability.} Now we discuss the scalability of our attacks compared to prior key-recovery attacks.
We executed the FALL attack~\cite{FALL} on CAC-DTL, Gen-Anti-SAT, SFLL-rem, and SFLL-flex and observed that it was unsuccessful in recovering the secret key.
The key-bit mapping (KBM)-SAT attack~\cite{CASUnlock} does not work for all cascaded-chain configurations of AND/OR gates.\footnote{To evaluate the efficacy of KBM-SAT attack on cascaded-chain configurations of AND/OR gates, we chose a smaller key-size ($|K|$=16) to iterate over all possible configurations.
Then, we executed the attack over all configurations, \textit{i.e.,} $2^7$ configurations across seven AND/OR gates. 
We observed many configurations where KBM-SAT attack failed to recover the secret key. 
Notation-wise, let us represent the cascaded-chain configuration of AND-OR-AND as $\langle0,1,0\rangle$.
As per our experimental analysis for $|K|$=16, a cascaded-chain configuration $\langle1,0,0,0,1,1,1\rangle$, KBM-SAT attack requires at least $2^{|K|/4}-1$ iterations, where $|K|$ is 16. 
Let us denote the configuration $\langle1,0,0,0,1,1,1\rangle$ as $\langle1,0^m,1^m\rangle$, where $m$ is calculated as $16/4 - 1 = 3$.
Extrapolating this to a key-size of 128 (512) for the aforementioned configuration, the KBM-SAT attack will require \textit{at least} $2^{32}-1$ ($2^{128}-1$) iterations to recover the secret key.
We verified our hypothesis by executing the attack for seven days---KBM-SAT was unsuccessful in recovering the secret key.
However, our KGM attack successfully recovers the secret key in an \textit{oracle-less setting} for both key-sizes.}
We showcase the performance of the KBM-SAT and our attack for certain cascaded-chain configurations in Fig.~\ref{fig:kbm_sat_vs_kgm}.
On the contrary, our KGM attack recovers the secret key for all cascaded-chain configurations.
The SPI attack~\cite{SPI_USENIX} breaks SFLL-flex for only one $\mathtt{PP}$. 
However, SFLL-flex allows a designer to protect multiple $\mathtt{PP}$. 
Our attack recovers all $\mathtt{PP}$ for SFLL-flex in an \textit{oracle-less} setting.
Further, our attack is agnostic to the implementation style of the restore unit, \textit{i.e.,} whether constructed using look-up tables~\cite{yasin_CCS_2017} or logic gates.

\begin{figure}
\centering
\includegraphics[width=.75\columnwidth]{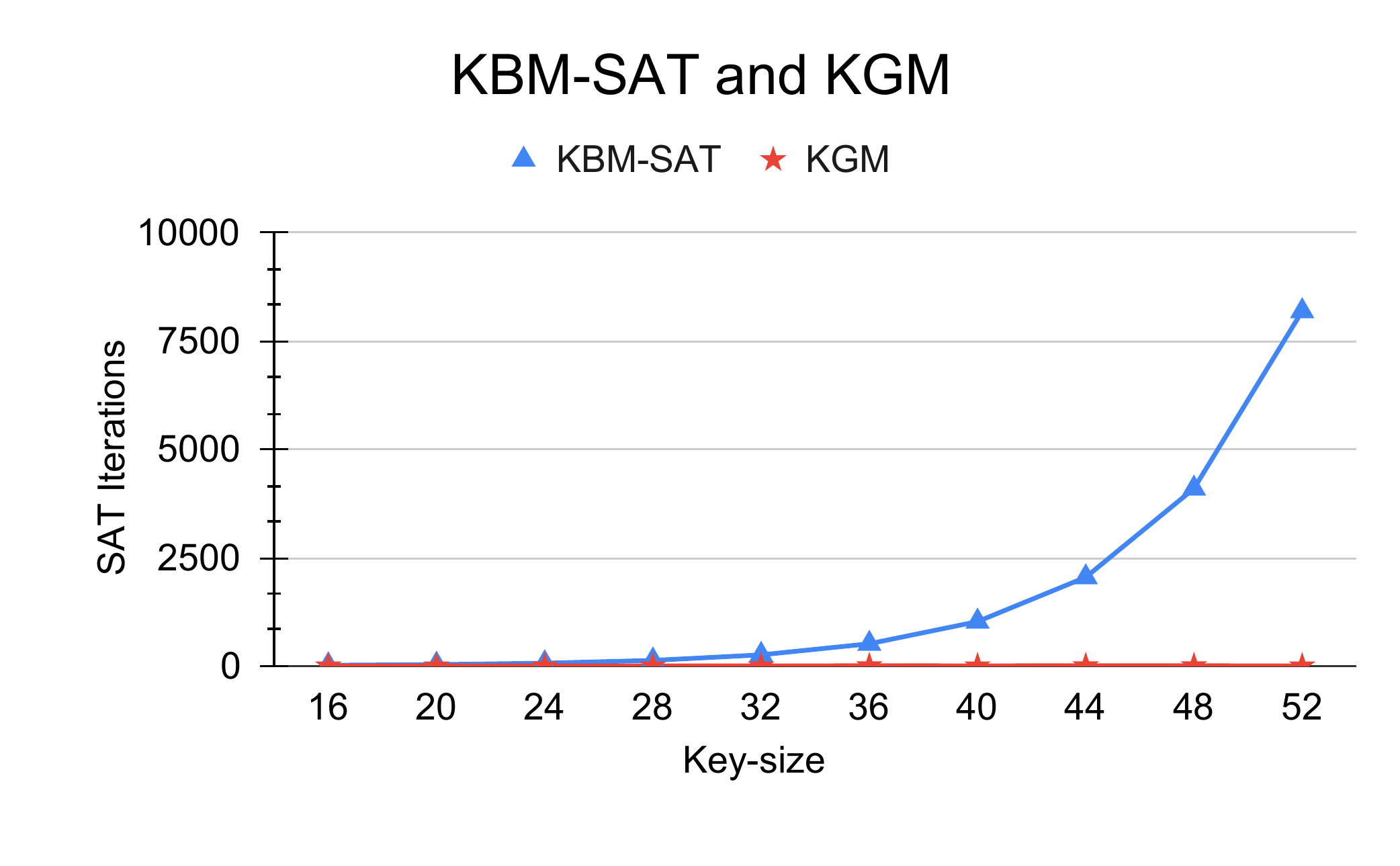}
\smallerspacecaption
\smallerspacecaption
\smallerspacecaption
\caption{Efficacy of key-bit mapping (KBM)-SAT attack~\cite{CASUnlock} compared with our KGM attack on b14\_C circuit locked using CASLock~\cite{caslock_ches_2019}.
}
\label{fig:kbm_sat_vs_kgm}
\end{figure}

\subsection{Application as a Diagnostic Tool for Designers}
\label{sec:diagnostic_tool}

Designers can utilize our key-recovery attacks to determine the lower bounds of security achieved by the hardware implementation of a PSLL technique.
Recall that our experimental analysis reinforces that implementing a PSLL technique (on hardware) with a key-size of $|K|$ does not necessarily imply $|K|$-bit security.
If an attacker recovers a portion of the secret key, $|K^*|$, then the actual security drops to $|K|-|K^*|$~\cite{guo2018introduction}.
Since the design IP is the ``secret sauce'' for industries and defense establishments, the ramifications are immense if the entity in question does not utilize a suitable diagnostic tool to ascertain the structural signatures emanated from the hardware implementation of a PSLL technique.
\section{Discussion}
\label{sec:discussion}

\subsection{Other IP Protection Solutions}
\label{sec:non-PSLL}

Design IPs can either be state-less (combinational) or state-full (sequential).
Researchers have proposed several logic locking techniques that lock the combinational portion of the design IP. 
Furthermore, since industrial design IPs are inherently sequential, sequential locking (sequential obfuscation) is another promising direction to prevent piracy of design IPs and unauthorized overproduction of ICs.
Sequential designs consist of finite state machines (FSMs), which are protected using FSM-based obfuscation. 
The original FSM is protected by (i)~inserting additional states known as obfuscated states~\cite{chakraborty2009harpoon} and black hole states~\cite{alkabani2007active}, (ii)~locking combinational logic cones of FSM states~\cite{janusHD_DATE},
and (iii)~re-encoding states to obscure the boundary between original state registers~\cite{TriLock_DATE}.
Hardware redaction is another approach where sensitive portions of the circuit are replaced with an embedded FPGA~\cite{mohan2021hardware}.
Another direction is eradicating key leakage in the scan mode~\cite{disorc,janusHD_DATE}.
Such techniques enable the use of high-corruption locking techniques.
Furthermore, provably secure block ciphers or pseudo-random functions can be adopted to thwart cryptanalysis and SAT-based attacks.

\subsection{Comparison with Password Selection and Cracking}
\label{sec:password_cracking_logic_locking}

\textit{A logic-locked circuit can be considered analogous to a password-protected computer system.}
In logic locking, the secret key is chosen either based on a seed or is a random string of $|K|$ bits. 
Unlike human-chosen passwords/PINs, the key search space is extensive (2$^{|K|}$) in logic locking~\cite{wang2017understanding}.
The key-recovery attacks discussed in our work aim to recover secret keys from the hardware implementation of PSLL techniques; this can be considered analogous to password cracking. 
Analogous to personally identifiable information in targeted online password guessing~\cite{wang2016targeted}, a partially recovered (correct) secret key from a locked circuit aids an attacker in reducing the security-level of a locked circuit.
However, unlike targeted password guessing~\cite{wang2016targeted}, there is no limit imposed on the number of queries an attacker makes to an oracle.

\subsection{Future Work}
\label{sec:future_work}

Our investigation highlights that vulnerabilities of the considered PSLL techniques stem from their hardware implementation. 
Although one can argue that 
synthesis tools prioritize
logic optimization over security-driven goals, our observations about specific $\mathtt{PP}$ that lead to the dissolution of key-revealing logic gate(s)
are important.
This means an interplay exists between the structure (and/or functionality) of the underlying circuit and the $\mathtt{PP}$ we wish to protect. 
Searching for this elusive set of $\mathtt{PP}$ is computationally challenging due to the exponential complexity of the number of input patterns. 
Recall that our analysis of non-hard-coded PSLL techniques
reveals that the locking unit remains 
disjoint from the original circuit.
Addressing the (i)~dissolution of the locking unit, (ii)~integration of security algorithms with CAD tools, and (iii)~extension of attacks presented in this paper toward sequential obfuscation techniques is reserved for future work.
\section{Conclusion}
\label{sec:conclusion}

In this work, we conceptualize and develop generalized attacks that recover the secret key from the hardware implementation of PSLL techniques. 
We extract various structural and functional properties contingent on the underlying hardware implementation and use (i)~principles of test pattern generation and (ii)~Boolean transformations to develop two attack algorithms that recover the secret key from locked circuits.
We evaluate the efficacy of our attacks across different parameters such as the choice of technology libraries, synthesis tools, synthesis settings, key-size, etc., and observe 100\% accuracy for 14 PSLL techniques (including nine previously unbroken techniques).
Besides demonstrating the security-obliviousness of current academic and commercial computer-aided design tools, our attacks provide several important insights, viz., (i)~the structural security of hard-coded PSLL techniques are contingent on the choice of the secret key, and (ii)~the locking unit remains disjoint from the original circuit for non-hard-coded PSLL techniques.
Additionally, developers of PSLL techniques can utilize our attacks 
to ascertain the lower-bound security 
achieved by hardware implementations.
Finally, we release our locked circuits and attack binaries with the hope that (i)~it would foster the development of secure hardware implementation of PSLL techniques and (ii)~future attackers can benchmark the performance of their developed attacks on common datasets, thereby enabling reproducibility.

\def\bibfont{\footnotesize}
\bibliography{main}
\bibliographystyle{IEEEtran}

\end{document}